\documentclass[10pt]{article}
\usepackage[english]{babel}								
\usepackage[utf8]{inputenc}								
\usepackage[T1]{fontenc}								
\usepackage{amsmath,amsfonts,amssymb,amsthm,cancel,siunitx,
calculator,calc,mathtools,empheq,latexsym}
\usepackage{subfig,epsfig,tikz,float}		            
\usepackage{pgfplots}                                   
\pgfplotsset{compat=1.18}
\usetikzlibrary{decorations.pathreplacing}              
\usepackage{booktabs,multicol,multirow,tabularx,array}          
\usepackage{natbib}
\usepackage{algorithm}
\usepackage{algpseudocode}
\usepackage{url}                                         
\newtheorem{definition}{Definition}[section]

\newtheorem{theorem}{Theorem}[section]
\newtheorem{proposition}{Proposition}[section]

\title{\normalsize\bf%
\uppercase{SAT Encodings for Bandwidth Coloring: A Systematic Design Study}
}
\author{%
Duc Trung Kim Nguyen$^{1}$, Tuyen Van Kieu$^{2}$, and Khanh Van To$^{3*}$
}

\begin{document}

\date{}

\maketitle

\vspace{-0.5cm}

\begin{center}
{\footnotesize 
*Corresponding author\\
$^1$VNU University of Engineering and Technology, Hanoi, Vietnam / 23021533@vnu.edu.vn / https://orcid.org/0009-0001-7828-4985\\
$^2$VNU University of Engineering and Technology, Hanoi, Vietnam / tuyenkv@vnu.edu.vn / https://orcid.org/0009-0007-7800-7172\\
$^3$VNU University of Engineering and Technology, Hanoi, Vietnam / khanhtv@vnu.edu.vn / https://orcid.org/0009-0008-1907-7848\\
 }
\end{center}

\bigskip
\noindent
{\small{\bf ABSTRACT.}
The Bandwidth Coloring Problem (BCP) generalizes graph coloring by enforcing minimum separation constraints between adjacent vertices and arises in frequency assignment applications. While SAT-based approaches have shown promise for exact BCP solving, the encoding design space remains largely unexplored.
This paper presents a systematic study of SAT encodings for the BCP, proposing a unified framework with six encoding methods across three categories: one-variable, two-variable, and block encodings. We evaluate the impact of key features including incremental solving and symmetry breaking. While symmetry breaking has been studied for graph coloring, it has not been systematically evaluated for SAT-based BCP solvers. Our analysis reveals significant interaction effects between encoding choices and solver configurations.
The proposed framework achieves state-of-the-art performance on GEOM and MS-CAP benchmarks. Block encodings solve GEOM120b, the hardest instance, to proven optimality in approximately 1000 seconds, whereas previous methods could not solve it within a one-hour time limit.
}

\medskip
\noindent
{\small{\bf Keywords}{:} 
Bandwidth coloring problem; SAT encoding; Frequency assignment.
}

\baselineskip=\normalbaselineskip

\section{Introduction}\label{sec:intro}

The graph coloring problem (GCP) is a classical combinatorial optimization problem that asks for an assignment of colors to the vertices of a graph such that no two adjacent vertices share the same color while minimizing the number of colors used~\citep{malaguti2010,husfeldt2015}. This problem has extensive applications in register allocation~\citep{chaitin1981}, scheduling~\citep{leighton1979}, and computing sparse Jacobian matrices~\citep{gebremedhin2005}. Despite its theoretical and practical importance, finding an optimal coloring remains NP-hard, and compared to other well-known NP-hard problems such as the traveling salesman problem, only relatively small instances can be solved to provable optimality~\citep{faber2024}.

The bandwidth coloring problem (BCP) is a natural generalization of the GCP in which each edge $\{u, v\}$ carries a positive integer weight $d(u, v)$ representing a minimum required separation between the colors assigned to its endpoints. 
Formally, a coloring is defined by a labeling function $c : V \rightarrow \mathbb{N}$, where $c(u)$ and $c(v)$ denote the colors assigned to vertices $u$ and $v$, respectively. A coloring is feasible if it satisfies the distance constraint $|c(u) - c(v)| \geq d(u,v)$ for every edge $\{u,v\} \in E$. The objective is to minimize the largest color assigned to any vertex, commonly referred to as the \emph{span}~\citep{marti2010}. 
When all edge weights equal one, the BCP reduces to the classical GCP. The BCP arises naturally in frequency assignment problems (FAP) for wireless and cellular networks~\citep{hale1980,aardal2007}. In these applications, radio transmitters are modeled as vertices, and edge weights represent minimum frequency separations required to avoid electromagnetic interference~\citep{eisenblatter2002,koster1999}. The span of the coloring corresponds to the bandwidth required for the network, making minimization crucial for efficient spectrum utilization~\citep{dias2021,lai2013}.

The predominant approaches for solving the BCP can be broadly classified into two categories: metaheuristic methods and exact algorithms~\citep{marti2010}. Metaheuristic approaches, including multistart iterated tabu search~\citep{lai2013}, learning-based hybrid search~\citep{jin2015}, path relinking~\citep{lai2014,lai2016}, and variable neighborhood search~\citep{matic2017}, have demonstrated fast computation times but provide no guarantee of optimality. In contrast, exact methods based on integer linear programming (ILP) and constraint programming (CP) have been investigated by~\citet{dias2021}, who observed that CP formulations are particularly effective for BCP due to efficient constraint propagation mechanisms, while ILP formulations tend to suffer from a large number of constraints that grows with edge weights.

More recently, SAT-based approaches have emerged as a promising alternative for solving graph coloring variants exactly~\citep{hebrard2020,heule2022}. The assignment-based SAT encoding introduces binary variables to represent color assignments directly~\citep{heule2022,dey2013}, while order-based encodings use variables indicating relative ordering between colors and vertices~\citep{tamura2009,ansotegui2004}. For the GCP, \citet{heule2022} proposed the CliColCom algorithm that alternates between maximum clique and graph coloring problems using SAT approaches. For the BCP, \citet{faber2024} proposed partial-ordering based SAT encodings (POP-S-B and POPH-S-B), which represent color assignments through ordering variables rather than direct assignment variables. Their formulations achieve an asymptotically smaller number of constraints compared to assignment-based models and have successfully solved six previously open GEOM benchmark instances to optimality. However, the design space for SAT encodings remains largely unexplored, with several important questions unanswered. First, although the POP-S-B and POPH-S-B encodings have demonstrated promising results, a comprehensive evaluation of alternative variable representations, including one-variable versus two-variable encodings and greater-than versus less-than semantics, has not been conducted. Second, while symmetry breaking strategies have been studied extensively for the GCP~\citep{mendezdiaz2008}, their effectiveness for the BCP remains uninvestigated. Third, the impact of incremental SAT solving versus non-incremental approaches on BCP performance remains unclear~\citep{glorian2019}. Fourth, block encodings~\citep{truong2025sequential}, which organize variables into structured groups for more efficient constraint propagation, have not been applied to the BCP. Fifth, the interaction effects between these configurable features, specifically whether they operate independently or exhibit synergistic or antagonistic relationships, remain unknown.

This paper addresses these gaps by presenting a systematic design study of SAT encodings for the BCP. We propose a comprehensive framework that explores the design choices for SAT-based BCP solvers, encompassing six encoding methods organized into three categories: one-variable encodings (1G and 1L), two-variable encodings (2G and 2L), and block encodings (X and Xa). For each encoding type, we evaluate the impact of key configurable features: incremental solving modes, symmetry breaking, and block width strategies.

The main contributions of this paper are as follows. First, we demonstrate that the proposed block encodings are able to solve GEOM120b, the most challenging benchmark instance, to proven optimality in approximately 1000 seconds, whereas previous approaches could not solve this instance within a one-hour time limit.
Second, we present a unified SAT encoding framework for the BCP and conduct a comprehensive comparative study that identifies factors influencing solver performance. Third, we observe important interaction effects: incremental solving benefits block encodings substantially but not one-variable encodings, and symmetry breaking effectiveness varies significantly across encoding types.

The remainder of this paper is organized as follows. Section~\ref{sec:problem} formally defines the bandwidth coloring problem and establishes the notation used throughout. Section~\ref{sec:literature} reviews related work on metaheuristic and exact approaches for the BCP. Section~\ref{sec:method} presents the encoding framework in detail, describing the six encoding methods and their configurable features. Section~\ref{sec:experiments} reports computational experiments on GEOM and MS-CAP benchmark instances. Finally, Section~\ref{sec:conclusions} summarizes the findings and discusses directions for future research.


\section{Problem Definition}\label{sec:problem}

This section formally defines the bandwidth coloring problem and related variants, and establishes the notation used throughout the paper.

\subsection{Graph Coloring Problem}

Let $G = (V, E)$ be an undirected graph with vertex set $V$ and edge set $E$, where each edge is a two-element subset $e = \{u, v\}$ of $V$. The vertices $u$ and $v$ of an edge $\{u, v\}$ are called \emph{adjacent vertices} or \emph{neighbors}. A \emph{$k$-coloring} of $G$ is an assignment of colors from the set $\{1, 2, \ldots, k\}$ to the vertices such that no two adjacent vertices receive the same color~\citep{malaguti2010}. The \emph{chromatic number} $\chi(G)$ is the minimum value of $k$ for which a valid $k$-coloring exists.

\begin{definition}[Graph Coloring Problem]
Given an undirected graph $G = (V, E)$, the graph coloring problem (GCP) asks for an assignment $c: V \rightarrow \mathbb{N}$ that minimizes $\max_{v \in V} c(v)$ subject to the constraint $c(u) \neq c(v)$ for all $\{u, v\} \in E$.
\end{definition}

Table~\ref{table:notation} summarizes the notation used throughout this paper.

\begin{table}[!htb]
\centering
\caption{Notation used in this paper.}
\begin{tabularx}{0.95\textwidth}{l X}
\toprule
\multicolumn{1}{c}{\textbf{Symbol}} & \multicolumn{1}{c}{\textbf{Description}} \\
\midrule
$G = (V, E)$ & Undirected graph with vertex set $V$ and edge set $E$ \\
$n = |V|$ & Number of vertices \\
$m = |E|$ & Number of edges \\
$d(\{u,v\})$ or $d_{u,v}$ & Distance requirement (weight) for edge $\{u, v\}$ \\
$\bar{d}$ & Average edge distance: $\bar{d} = \frac{1}{|E|} \sum_{e \in E} d(e)$ \\
$k$ & Span (largest color used in a coloring) \\
$H$ & Upper bound on the span \\
$c(v)$ & Color assigned to vertex $v$ \\
$\chi(G)$ & Chromatic number of graph $G$ \\
$w(v)$ & Color demand for vertex $v$ (in BMCP) \\
\bottomrule
\end{tabularx}
\label{table:notation}
\end{table}

\subsection{Bandwidth Coloring Problem}

The bandwidth coloring problem generalizes the GCP by introducing edge weights that specify minimum separation requirements between colors of adjacent vertices~\citep{marti2010}. This generalization captures the practical constraint in frequency assignment where nearby transmitters require sufficiently different frequencies to avoid interference~\citep{dias2021}.

\begin{definition}[Bandwidth Coloring Problem]
Given an undirected graph $G = (V, E)$ with positive integer edge weights $d: E \rightarrow \mathbb{N}$, the bandwidth coloring problem (BCP) asks for an assignment $c: V \rightarrow \mathbb{N}$ that minimizes $\max_{v \in V} c(v)$ subject to the constraint:
\begin{equation}\label{eq:bcp-constraint}
|c(u) - c(v)| \geq d(\{u, v\}) \quad \forall \{u, v\} \in E
\end{equation}
\end{definition}

Figure~\ref{fig:bcp-example} illustrates the BCP with a small example. The graph has four vertices connected by edges with weights 1, 2, and 3. A valid coloring assigns colors such that adjacent vertices satisfy the distance constraint. For instance, edge $\{A, B\}$ has weight 2, so $|c(A) - c(B)| = |1 - 3| = 2 \geq 2$ is satisfied. The right panel shows the constraint verification for all five edges, confirming the validity of the coloring with span 6.

\begin{figure}[htb]
\centering
\begin{tikzpicture}[
    vertex/.style={circle, draw, minimum size=8mm, font=\small},
    edge label/.style={font=\small, fill=white, inner sep=1pt}
]
\node[anchor=north] at (1.5, 3.5) {\textbf{(a) Graph with edge weights}};

\node[vertex] (A) at (0, 2) {$A$};
\node[vertex] (B) at (3, 2) {$B$};
\node[vertex] (C) at (3, 0) {$C$};
\node[vertex] (D) at (0, 0) {$D$};

\draw[thick] (A) -- node[edge label, above] {$d=2$} (B);
\draw[thick] (B) -- node[edge label, right] {$d=3$} (C);
\draw[thick] (C) -- node[edge label, below] {$d=1$} (D);
\draw[thick] (D) -- node[edge label, left] {$d=2$} (A);
\draw[thick] (A) -- node[edge label, pos=0.3, above] {$d=1$} (C);

\node[anchor=north] at (7.5, 3.5) {\textbf{(b) Valid coloring (span = 6)}};

\node[vertex, fill=blue!30] (A2) at (6, 2) {$A$};
\node[vertex, fill=green!30] (B2) at (9, 2) {$B$};
\node[vertex, fill=yellow!50] (C2) at (9, 0) {$C$};
\node[vertex, fill=red!30] (D2) at (6, 0) {$D$};

\draw[thick] (A2) -- (B2);
\draw[thick] (B2) -- (C2);
\draw[thick] (C2) -- (D2);
\draw[thick] (D2) -- (A2);
\draw[thick] (A2) -- (C2);

\node[font=\small] at (6, 2.7) {$c=1$};
\node[font=\small] at (9, 2.7) {$c=3$};
\node[font=\small] at (9, -0.7) {$c=6$};
\node[font=\small] at (6, -0.7) {$c=4$};

\node[anchor=west, font=\footnotesize] at (9.8, 2.5) {$|1-3|=2 \geq 2$ \checkmark};
\node[anchor=west, font=\footnotesize] at (9.8, 1.8) {$|3-6|=3 \geq 3$ \checkmark};
\node[anchor=west, font=\footnotesize] at (9.8, 1.1) {$|6-4|=2 \geq 1$ \checkmark};
\node[anchor=west, font=\footnotesize] at (9.8, 0.4) {$|4-1|=3 \geq 2$ \checkmark};
\node[anchor=west, font=\footnotesize] at (9.8, -0.3) {$|1-6|=5 \geq 1$ \checkmark};

\end{tikzpicture}
\caption{Example of the Bandwidth Coloring Problem.}
\label{fig:bcp-example}
\end{figure}
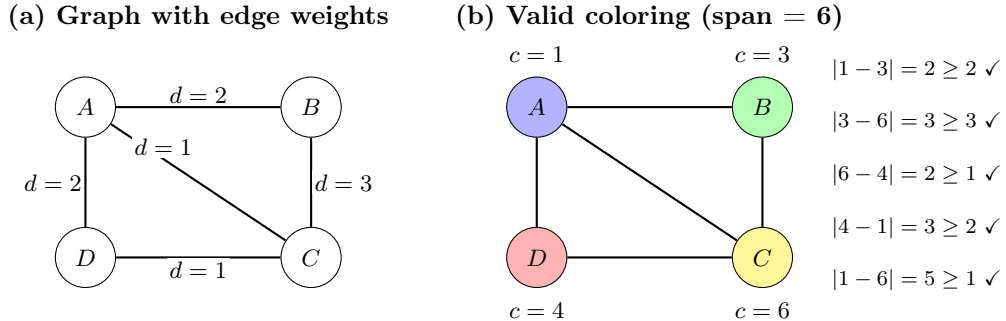

The objective value $\max_{v \in V} c(v)$ is called the \emph{span} of the coloring. When all edge weights are equal to one (i.e., $d(e) = 1$ for all $e \in E$), the BCP reduces to the classical GCP~\citep{faber2024}.

The BCP is a special case of the T-coloring problem~\citep{cozzens1982,roberts1991}, where each edge is associated with a forbidden set of color differences. A further generalization is the bandwidth multicoloring problem (BMCP), which allows each vertex to receive multiple colors~\citep{lai2013,dias2021}. Standard transformations allow a BMCP instance to be converted into an equivalent BCP instance by replacing each vertex with a clique that represents its color demand, thereby enabling BCP algorithms to be applied to problems originally formulated as BMCP.
The GCP is known to be NP-hard~\citep{malaguti2010}, and since the BCP generalizes the GCP, the BCP is also NP-hard. The decision version of the problem, which asks whether a valid $k$-coloring exists for a given $k$, is NP-complete.


\section{Literature Review}\label{sec:literature}

This section reviews existing approaches for solving the bandwidth coloring problem, organized into three categories: metaheuristic methods, integer and constraint programming approaches, and SAT-based methods. We conclude with an analysis of research gaps that motivate the present work. While we provide a comprehensive overview of all approaches for context, our experimental comparison in Section~\ref{sec:experiments} focuses exclusively on SAT-based methods, as they represent the current state of the art for exact BCP solving~\citep{faber2024}.

\subsection{Metaheuristic Approaches}

Metaheuristic algorithms have been extensively studied for the BCP due to their ability to find high-quality solutions in reasonable time. Notable approaches include Multistart Iterated Tabu Search (MITS)~\citep{lai2013}, which employs fast incremental evaluation techniques to efficiently update the objective function during neighborhood exploration; Learning-based Hybrid Search (LHS)~\citep{jin2015}, which combines forward checking construction with tabu search repair using adaptive guiding functions; path relinking strategies~\citep{lai2014,lai2016} that balance intensification and diversification through randomized trajectory generation; and Variable Neighborhood Search (VNS)~\citep{matic2017}, which decomposes local search into discrete procedures based on conflict counts and edge weight characteristics.

Despite their computational efficiency, metaheuristic approaches share fundamental limitations. They provide no guarantee of solution optimality, which is critical in applications such as frequency assignment where suboptimal allocations may lead to interference or wasted spectrum. These methods are also susceptible to becoming trapped in local optima, and their performance often depends heavily on parameter tuning specific to problem instances. These limitations motivate the development of exact methods capable of proving optimality.

\subsection{Integer and Constraint Programming}

Exact methods based on mathematical programming have been investigated to prove optimality for BCP instances. \citet{dias2021} conducted a comparative study of integer programming (IP) and constraint programming (CP) formulations. The CP formulation directly uses distance constraints $|c(u) - c(v)| \geq d_{u,v}$, which can be handled naturally by constraint propagation mechanisms. The assignment-based ILP model (ASS-I-B), as presented by~\citet{faber2024}, uses binary variables $x_{v, i}$ (where $x_{v,i} = 1$ iff vertex $v$ is assigned color $i$) and a variable $z_{\max}$ to represent the span. The formulation is given by:

\begin{align}
\min \quad & z_{\max} \\
\text{s.t.} \quad & \sum_{i=1}^{H} x_{v, i} = 1 && \forall v \in V \\
& x_{u, i} + x_{v, j} \leq 1 && \forall \{u, v\} \in E, \forall i, j: |i - j| < d(u, v) \\
& z_{\max} \geq i \cdot x_{v, i} && \forall v \in V, \forall i \in \{1, \ldots, H\} \\
& x_{v, i} \in \{0, 1\}, z_{\max} \in \mathbb{Z}_{\geq 0} && \forall v \in V, \forall i \in \{1, \ldots, H\}
\end{align}

This model ensures each vertex receives exactly one color, prevents conflicting color assignments for adjacent vertices, and correctly defines the objective function. However, both IP and CP formulations exhibit significant limitations for the BCP. The assignment-based IP model suffers from two major drawbacks: (1) the presence of symmetries in the solution space, where any permutation of color labels produces an equivalent solution, leading to a significantly larger search space; and (2) the number of constraints grows with $O(H \cdot |E| \cdot \bar{d})$, where $\bar{d}$ is the average edge weight, making the formulation impractical for instances with large edge weights~\citep{faber2024}. While \citet{dias2021} observed that CP is effective for BCP instances due to efficient propagation of distance constraints, both approaches struggle with scalability on larger instances. These limitations have motivated the exploration of SAT-based approaches, which offer alternative constraint representations with potentially more favorable scaling properties.

\subsection{SAT-based Approaches}

SAT-based methods have emerged as a competitive alternative for exact graph coloring. The \emph{assignment-based} SAT encoding uses binary variables $x_{v,i}$, where $x_{v,i} = \text{true}$ iff vertex $v$ is assigned color $i$, to represent color assignments directly~\citep{heule2022,dey2013}. Constraints ensure that each vertex receives exactly one color and that adjacent vertices receive different colors.

\citet{tamura2009} and \citet{ansotegui2004} proposed an \emph{order-based} encoding using variables $y_{v,i}$ that indicate whether color $i$ is smaller than the color assigned to vertex $v$. This encoding has theoretical advantages in constraint count but has received limited attention in subsequent literature until recently.

For the GCP, \citet{heule2022} introduced the CliColCom algorithm, which alternates between solving maximum clique and graph coloring problems using SAT, where solutions from one problem guide the search for the other. \citet{hebrard2020} developed constraint and satisfiability reasoning techniques that incorporate clause learning specifically designed for graph coloring. \citet{glorian2019} proposed an incremental SAT approach that adds constraints dynamically during the search.

For the BCP specifically, \citet{faber2024} proposed partial-ordering based SAT encodings (POP-S-B and POPH-S-B) that generalize the ordering approach of \citet{tamura2009}. Their key insight is that the number of constraints depends only on $H \cdot |E|$ rather than $H \cdot |E| \cdot \bar{d}$ as in assignment-based models. This asymptotic advantage translates to practical performance gains: their encodings solved six previously open GEOM benchmark instances to optimality for the first time. The work also adapted symmetry-breaking constraints from \citet{mendezdiaz2008} to the SAT context.

\subsection{Research Gaps}

Despite the progress in SAT-based methods for the BCP, several important questions remain unexplored. This subsection identifies four key research gaps that motivate the present work.

\begin{itemize}
    \item \textit{Variable representations:} Existing SAT encodings for the BCP use only greater-than semantics ($y_{v,i} = 1$ iff $c(v) > i$)~\citep{faber2024}, and the relative performance of less-than semantics, one-variable versus two-variable representations, and block encodings has not been investigated.    \item \textit{Incremental solving:} While incremental SAT has been explored for the GCP~\citep{glorian2019}, its effectiveness for the BCP with distance constraints remains unclear; the trade-off between clause reuse and learned clause quality deserves systematic study.
    \item \textit{Symmetry breaking:} Current symmetry-breaking approaches focus on color symmetries~\citep{mendezdiaz2008}, but alternative symmetry breaking strategies have not been systematically evaluated for SAT-based BCP solvers.
    \item \textit{Block encoding strategies:} Block encodings with different block width strategies (fixed versus varying) have not been systematically compared.
\end{itemize}

The present work addresses these gaps by proposing a systematic framework that explores encoding design choices including variable representations, symmetry breaking strategies, solving modes, and block encoding variants.


\section{SAT Encoding Framework}\label{sec:method}

This section presents the SAT encoding framework, which systematically explores the design space for BCP solvers. We describe six encoding methods implemented in our framework: one-variable encodings (1G and 1L), two-variable encodings (2G and 2L), and block encodings (X without auxiliary variables and Xa with auxiliary variables). For each encoding, we present the variable definitions, constraint formulations, and implementation details.

\subsection{Framework Overview}

The proposed framework consists of six encoding methods with configurable features, yielding 36 distinct configurations. Each configuration combines an encoding method with optional features: incremental solving modes, symmetry breaking on the highest-degree vertex, and block width strategies (for X and Xa). Table~\ref{table:config-matrix} summarizes the available features for each encoding method.

\begin{table}[!htb]
\centering
\caption{Configuration matrix: features available for each encoding method.}
\begin{tabular}{l c c c c}
\toprule
\textbf{Method} & \textbf{Width} & \textbf{Incremental} & \textbf{Symmetry} & \textbf{\# Configs} \\
\midrule
1G & -- & $y$, none & true, false & 4 \\
1L & -- & $y$, none & true, false & 4 \\
2G & -- & $x$, $y$, none & true, false & 6 \\
2L & -- & $x$, $y$, none & true, false & 6 \\
X & fixed, vary & $x$, none & true, false & 8 \\
Xa & fixed, vary & $x$, none & true, false & 8 \\
\midrule
& & & Total: & 36 \\
\bottomrule
\end{tabular}
\begin{flushleft}
\footnotesize{1G: one-variable greater-than; 1L: one-variable less-than; 2G: two-variable greater-than; 2L: two-variable less-than; X: block encoding without auxiliary variables; Xa: block encoding with auxiliary variables. Width: block width strategy (fixed or varying). Incremental: variable type used for assumptions ($x$, $y$, or none for non-incremental).}
\end{flushleft}
\label{table:config-matrix}
\end{table}

Each configuration is specified by the following parameters:
\begin{itemize}
    \item \textit{method}: Encoding method (1G, 1L, 2G, 2L, X, Xa);
    \item \textit{width}: Block width strategy (fixed-width, vary-width for X and Xa; empty for others);
    \item \textit{incremental}: Variable type for incremental solving assumptions ($x$, $y$, or empty for non-incremental);
    \item \textit{symmetry}: Whether symmetry breaking is enabled (true or false).
\end{itemize}
For example, the configuration (Xa, fixed-width, $x$, true) denotes the block encoding with fixed block width, incremental solving using $x$ variables for assumptions, and symmetry breaking enabled.

\subsection{Upper Bound Computation}

SAT-based approaches require an upper bound $H$ on the span to define the variable domains~\citep{faber2024}. A trivial upper bound for the BCP is $H = |V| \cdot \max_{e \in E} d(e)$, corresponding to the case where each vertex receives a distinct color with maximum separation. In practice, tighter bounds can be obtained using greedy heuristics. A commonly used approach based on DSatur (degree of saturation)~\citep{brelaz1979} processes vertices sequentially, assigning each vertex the smallest feasible color that satisfies all constraints with previously colored neighbors. The resulting coloring provides an upper bound $H$ that is typically much smaller than the trivial bound and directly impacts the size of the SAT encoding. Algorithm~\ref{alg:dsatur} describes this procedure.

\begin{algorithm}[htb]
\caption{DSatur-based Upper Bound Computation}
\label{alg:dsatur}
\begin{algorithmic}[1]
\Require Graph $G = (V, E)$ with edge weights $d$
\Ensure Upper bound $H$ on the optimal span
\State $c(v) \gets $ unassigned for all $v \in V$
\State $H \gets 0$
\While{exists uncolored vertex}
    \State $v \gets$ uncolored vertex with maximum saturation (ties broken by degree)
    \State Compute forbidden intervals: $I \gets \bigcup_{u \in N(v), c(u) \neq \text{unassigned}} [c(u) - d_{v,u} + 1, c(u) + d_{v,u} - 1]$
    \State $c(v) \gets$ smallest positive integer not in $I$
    \State $H \gets \max(H, c(v))$
\EndWhile
\State \Return $H$
\end{algorithmic}
\end{algorithm}

\subsection{One-Variable Encodings}

One-variable encodings represent color assignments using only ordering variables, requiring $kn$ Boolean variables for a graph with $n$ vertices and span $k$. The color of each vertex is determined implicitly by the transition point in the ordering sequence. We define two variants based on the semantics of the ordering variables: Greater-than (1G) and Less-than (1L).

\subsubsection{Greater-than Encoding (1G)}

The 1G encoding uses Boolean variables $y_{u,j}$ for each vertex $u \in V$ and color $j \in [1, k]$ with the semantic $y_{u,j} = \text{true} \Leftrightarrow c(u) \geq j$. This formulation is commonly used in order encodings for CSP and SAT~\citep{tamura2009}.
A vertex $u$ is assigned color $j$ if $y_{u,j}$ is true and $y_{u,j+1}$ is false (with boundary condition $y_{u,k+1} = \text{false}$).

The constraints are defined as follows:
\begin{itemize}
    \item \textit{Base ranges:} $y_{u,1} = \text{true}$ for all $u \in V$.
    \item \textit{Ordering:} $y_{u,j} \rightarrow y_{u,j-1}$ for all $j \in [2, k]$.
    \item \textit{Distance:} For each edge $\{u,v\}$ with weight $d_{u,v}$ and color $j$, if $c(u) = j$, then $c(v)$ must be outside $(j-d_{u,v}, j+d_{u,v})$.
    This implies $\neg (y_{u,j} \wedge \neg y_{u,j+1}) \vee (\neg y_{v, j-d_{u,v}+1} \vee y_{v, j+d_{u,v}})$.
    Rearranging gives the clause:
    \begin{equation}
    y_{u,j} \rightarrow (y_{u,j+1} \vee \neg y_{v, j-d_{u,v}+1} \vee y_{v, j+d_{u,v}})
    \end{equation}
    Boundary conditions apply (e.g., if $j+d_{u,v} > k$, then $y_{v, j+d_{u,v}}$ is false).
\end{itemize}

\subsubsection{Less-than Encoding (1L)}

The 1L encoding employs variables $y_{u,j}$ with the semantic $y_{u,j} = \text{true} \Leftrightarrow c(u) \leq j$.
\textit{Note:} This differs from the encoding presented by \citet{faber2024}, which uses strictly less-than semantics (i.e., ordering variables indicate $c(u) < j$, or equivalently $c(u)$ is smaller than $j$). Our inclusive definition ($c(u) \leq j$) aligns symmetrically with the 1G encoding.

The constraints for 1L are:
\begin{itemize}
    \item \textit{Base ranges:} $y_{u,k} = \text{true}$ for all $u \in V$.
    \item \textit{Ordering:} $y_{u,j} \rightarrow y_{u,j+1}$ for all $j \in [1, k-1]$.
    \item \textit{Distance:} The distance constraint for edge $\{u,v\}$ with weight $d_{u,v}$ enforces that if $c(u) = j$, then $c(v)$ is not in $(j-d_{u,v}, j+d_{u,v})$.
    In terms of $\leq$ variables, $c(v) \leq j-d_{u,v}$ corresponds to $y_{v, j-d_{u,v}}$, and $c(v) \geq j+d_{u,v}$ corresponds to $\neg y_{v, j+d_{u,v}-1}$.
    The clause is:
    \begin{equation}
    y_{u,j} \rightarrow (\neg y_{u,j-1} \vee y_{v, j-d_{u,v}} \vee \neg y_{v, j+d_{u,v}-1})
    \end{equation}
\end{itemize}

\subsection{Proposed Two-Variable Encodings (2G and 2L)}

While the one-variable encodings derive color assignments implicitly from ordering variables, we propose extending this approach by explicitly maintaining both assignment variables $x_{u,j}$ and ordering variables $y_{u,j}$. Although this increases the number of variables, explicitly modeling $x_{u,j}$ allows for more direct and compact distance constraints, potentially improving unit propagation in the SAT solver.

\subsubsection{Two-Variable Greater-than (2G)}

The 2G encoding uses the following variables:
\begin{itemize}
    \item $x_{u,j} = \text{true} \Leftrightarrow c(u) = j$
    \item $y_{u,j} = \text{true} \Leftrightarrow c(u) \geq j$
\end{itemize}

The constraints are defined as follows:
\begin{itemize}
    \item \textit{Channeling:} Link $x$ and $y$ variables.
    \begin{equation}
    x_{u,j} \leftrightarrow (y_{u,j} \wedge \neg y_{u,j+1})
    \end{equation}
    \item \textit{Ordering:} $y_{u,j} \rightarrow y_{u,j-1}$.
    \item \textit{Distance:} Using $x_{u,j}$, the distance constraint simplifies to:
    \begin{equation}
    x_{u,j} \rightarrow (\neg y_{v, j-d_{u,v}+1} \vee y_{v, j+d_{u,v}})
    \end{equation}
    This states directly: if $u$ is assigned color $j$, then $c(v) \leq j-d_{u,v}$ or $c(v) \geq j+d_{u,v}$.
\end{itemize}

\subsubsection{Two-Variable Less-than (2L)}

The 2L encoding uses the following variables:
\begin{itemize}
    \item $x_{u,j} = \text{true} \Leftrightarrow c(u) = j$
    \item $y_{u,j} = \text{true} \Leftrightarrow c(u) \leq j$
\end{itemize}

The constraints are defined as follows:
\begin{itemize}
    \item \textit{Channeling:}
    \begin{equation}
    x_{u,j} \leftrightarrow (y_{u,j} \wedge \neg y_{u,j-1})
    \end{equation}
    \item \textit{Ordering:} $y_{u,j} \rightarrow y_{u,j+1}$.
    \item \textit{Distance:}
    \begin{equation}
    x_{u,j} \rightarrow (y_{v, j-d_{u,v}} \vee \neg y_{v, j+d_{u,v}-1})
    \end{equation}
\end{itemize}

\begin{proposition}[Encoding Size Comparison]\label{prop:encoding-size}
Let $n=|V|$, $m=|E|$, $k$ be the span, and $\bar{d}$ be the average edge weight. Table~\ref{table:encoding-size} compares the asymptotic size of the proposed encodings with those from~\citet{faber2024}.

\begin{table}[!htb]
\centering
\caption{Comparison of encoding sizes.}
\begin{tabular}{l l c c}
\toprule
\textbf{Source} & \textbf{Encoding} & \textbf{Variables} & \textbf{Clauses} \\
\midrule
\multirow{3}{*}{\citet{faber2024}} 
    & POP-S-B & $(k-1)n$ & $O(k(n+m))$ \\
    & POPH-S-B & $2kn$ & $O(k(n+m))$ \\
    & ASS-S-B & $kn$ & $O(km(2\bar{d}-1))$ \\
\midrule
\multirow{2}{*}{Proposed} 
    & 1G / 1L & $(k-1)n$ & $O(k(n+m))$ \\
    & 2G / 2L & $2kn$ & $O(k(n+m))$ \\
\bottomrule
\end{tabular}
\label{table:encoding-size}
\end{table}

\normalfont
The proposed 1G and 1L encodings have the same asymptotic complexity as POP-S-B, while 2G and 2L match POPH-S-B (with a constant factor increase due to channeling constraints). All order-based encodings have clause count independent of edge weights $\bar{d}$, offering asymptotic advantages over ASS-S-B for BCP instances with large weights.
\end{proposition}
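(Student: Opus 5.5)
The plan is a direct count of variables and clauses for each proposed encoding, read off the constraint families defined above. For the Faber et al.\ rows I will take the stated sizes from \citet{faber2024} as given and only re-derive the ASS-S-B clause count as a sanity check.

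\emph{Variables.} In 1G the literal $y_{u,1}$ is fixed to true by the base-range constraint, so it can be replaced by a constant. That leaves $y_{u,2},\dots,y_{u,k}$, i.e.\ $k-1$ free variables per vertex and $(k-1)n$ in total. For 1L the same argument applies with $y_{u,k}$ fixed true. In 2G and 2L each vertex carries $x_{u,j}$ and $y_{u,j}$ for $j\in[1,k]$, giving $2kn$ variables. Fixing the base $y$ literal would save only $n$ of these, which does not change the stated count up to lower-order terms.

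\emph{Clauses.} I will count each family separately.
\begin{itemize}
\item The base-range unit clauses contribute $n$.
\item The ordering clauses contribute at most $k-1$ binary clauses per vertex, so $O(kn)$.
\item In 2G/2L, channeling $x_{u,j}\leftrightarrow(y_{u,j}\wedge\neg y_{u,j\pm1})$ expands to three clauses per pair $(u,j)$, so $3kn=O(kn)$. This constant factor is the one noted in the statement.
\item For distance, the key observation is that for a fixed edge $\{u,v\}$, an orientation $(u\to v)$, and a color $j$, the encoding emits exactly one clause, of width at most $4$ in 1G/1L and $3$ in 2G/2L. The clause carries the weight $d_{u,v}$ only through the \emph{index} of the $y_v$ literals, not through the number of clauses.
\end{itemize}
Boundary cases, where an index falls outside $[1,k]$, either drop a literal or make the clause trivially satisfied. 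So each edge contributes at most $2k$ clauses and the distance family is $O(km)$. Summing the families gives $O(k(n+m))$, independent of $\bar d$.

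\emph{Correctness of a single clause per $(u,j)$.} I must check that one clause per $(u,j)$ really encodes the full interval exclusion. Take 1G with $c(u)=j$, so $y_{u,j}$ is true and $y_{u,j+1}$ is false. The clause then forces $\neg y_{v,j-d+1}\vee y_{v,j+d}$. Under the order semantics maintained by the ordering clauses, this is exactly $c(v)\le j-d$ or $c(v)\ge j+d$. The other encodings follow by the symmetric argument. This step is what makes the count linear rather than proportional to $d$, and it is the main point needing care; the rest is bookkeeping.

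\emph{Comparison with ASS-S-B.} For ASS-S-B, the edge constraint $x_{u,i}+x_{v,j}\le 1$ for all pairs with $|i-j|<d$ gives, for each $i$, up to $2d-1$ values of $j$. Summing over edges gives $\sum_e k(2d_e-1)=km(2\bar d-1)$ clauses. This dominates the other terms and yields the listed bound.

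Comparing the rows then gives the two concluding claims: 1G/1L match POP-S-B and 2G/2L match POPH-S-B in both columns, and all order-based encodings avoid the $\bar d$ factor that appears in ASS-S-B.
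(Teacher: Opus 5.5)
Your proposal is correct and follows the same route as the paper, which gives no derivation beyond the table and the remark that order-based clause counts do not depend on $\bar d$. Your family-by-family count (one distance clause per edge, orientation and color, with $d_{u,v}$ entering only through literal indices, plus $O(kn)$ base, ordering and channeling clauses) is the bookkeeping that remark rests on. One small caution on your side check: the ``symmetric argument'' for 1L works only for the clause $\neg y_{u,j}\vee y_{u,j-1}\vee y_{v,j-d_{u,v}}\vee\neg y_{v,j+d_{u,v}-1}$, because the paper's displayed 1L clause negates $y_{u,j-1}$ (apparently a sign slip); this affects none of the counts.
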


\subsection{Block Encodings (X, Xa)}

While the one-variable and two-variable encodings generate distance clauses for each color value, the block encodings take a different approach: they partition the color range into fixed-width blocks and use auxiliary \textit{range variables} to compactly represent color assignments within each block. This reduces the number of clauses for distance constraints, especially when edge weights are large.

Assignment-based encodings suffer from a large number of clauses when edge weights are large ($O(d_{u,v})$ clauses per edge-color pair). The block encoding addresses this by decomposing the distance constraint summation into blocks, reducing the size dependency on $d_{u,v}$. We adapt the block encoding from~\citet{faber2024} and propose two variants based on how auxiliary variables are handled.

\subsubsection{Variable Definitions and Staircase Block Structure}

The span $[1, k]$ is divided into $m$ blocks $B_1, B_2, \ldots, B_m$.
The framework supports two block width strategies:
\begin{itemize}
    \item \textit{Fixed-width:} All blocks have the same width $w$ (default $w=8$), which efficiently captures most distance constraints within a single block or two adjacent blocks.
    \item \textit{Vary-width:} For each vertex $u \in V$, the block width is determined by the maximum edge weight incident to $u$: $w_u = \max_{\{u,v\} \in E} d_{u,v}$.
\end{itemize}

We introduce auxiliary variables $R_{u, s, e}$ representing the sum of assignment variables in a range:
\begin{equation}
R_{u, a, b} = \text{true} \quad \Leftrightarrow \quad \sum_{j=a}^{b} x_{u,j} = 1
\end{equation}
where $a$ and $b$ are the start and end indices of a block or sub-block.

The exactly-one color constraint for vertex $u$ is enforced over the blocks:
\begin{equation}
\sum_{r=1}^{m} R_{u, s(B_r), e(B_r)} = 1
\end{equation}
enforcing that exactly one block contains the assigned color.

Figure~\ref{fig:block-structure} illustrates the staircase block structure for a vertex $u$ with block width $w=8$ and span $k = 8m$ (for some integer $m$). The first block uses a backward chain, middle blocks use bidirectional chains, and the last block uses a forward chain.

\begin{figure*}[h!]
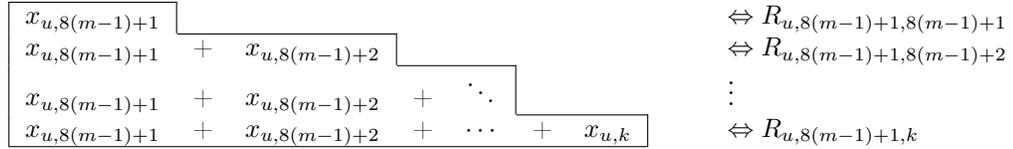

\centering
\resizebox{\textwidth}{!}{
    \begin{tabular}{llllllllllllllllllllllll}
    \cline{1-9}
    \multicolumn{1}{|l}{$x_{u,1}$} & + & $x_{u,2}$ & + & $x_{u,3}$ & + & $\cdots$ & + & \multicolumn{1}{l|}{$x_{u,8}$} &  &  &  &  &  &  &  &  &  &  &  &  &  &  & $\leq 1$ \\ \cline{1-2} \cline{11-11}
     & \multicolumn{1}{l|}{} & $x_{u,2}$ & + & $x_{u,3}$ & + & $\cdots$ & + & \multicolumn{1}{l|}{$x_{u,8}$} & \multicolumn{1}{l|}{+} & \multicolumn{1}{l|}{$x_{u,9}$} &  &  &  &  &  &  &  &  &  &  &  &  & $\leq 1$ \\ \cline{3-4} \cline{12-13}
     &  &  & \multicolumn{1}{l|}{} & $x_{u,3}$ & + & $\cdots$ & + & \multicolumn{1}{l|}{$x_{u,8}$} & \multicolumn{1}{l|}{+} & $x_{u,9}$ & + & \multicolumn{1}{l|}{$x_{u,10}$} &  &  &  &  &  &  &  &  &  &  & $\leq 1$ \\ \cline{5-6} \cline{14-14}
     &  &  &  &  & \multicolumn{1}{l|}{$\ddots$} &  &  & \multicolumn{1}{l|}{$\vdots$} & \multicolumn{1}{l|}{} & $\vdots$ &  &  & \multicolumn{1}{l|}{$\ddots$} &  &  &  &  &  &  &  &  &  & $\vdots$ \\ \cline{7-8} \cline{15-17}
     &  &  &  &  &  &  & \multicolumn{1}{l|}{} & \multicolumn{1}{l|}{$x_{u,8}$} & \multicolumn{1}{l|}{+} & $x_{u,9}$ & + & $\cdots$ & + & $x_{u,14}$ & + & \multicolumn{1}{l|}{$x_{u,15}$} &  &  &  &  &  &  & $\leq 1$ \\ \cline{9-9} \cline{18-19}
     &  &  &  &  &  &  &  &  & \multicolumn{1}{l|}{} & $x_{u,9}$ & + & $\cdots$ & + & $x_{u,14}$ & + & $x_{u,15}$ & + & \multicolumn{1}{l|}{$x_{u,16}$} &  &  &  &  & $\leq 1$ \\ \cline{11-19}
     &  &  &  &  &  &  &  &  &  &  &  &  &  &  &  &  &  &  &  &  &  &  &  \\ \cline{11-19}
     &  &  &  &  &  &  &  &  & \multicolumn{1}{l|}{} & $x_{u,9}$ & + & $\cdots$ & + & $x_{u,14}$ & + & $x_{u,15}$ & + & \multicolumn{1}{l|}{$x_{u,16}$} &  &  &  &  & $\leq 1$ \\ \cline{11-12} \cline{21-21}
     &  &  &  &  &  &  &  &  &  &  & \multicolumn{1}{l|}{} & $\cdots$ & + & $x_{u,14}$ & + & $x_{u,15}$ & + & \multicolumn{1}{l|}{$x_{u,16}$} & \multicolumn{1}{l|}{+} & \multicolumn{1}{l|}{$x_{u,17}$} &  &  & $\leq 1$ \\ \cline{13-14} \cline{22-22}
     &  &  &  &  &  &  &  &  &  &  &  &  & \multicolumn{1}{l|}{$\ddots$} &  &  & $\vdots$ &  & \multicolumn{1}{l|}{$\vdots$} & \multicolumn{1}{l|}{} &  & \multicolumn{1}{l|}{$\ddots$} &  & $\vdots$ \\ \cline{15-16} \cline{21-22}
     &  &  &  &  &  &  &  &  &  &  &  &  &  &  & \multicolumn{1}{l|}{} & $x_{u,15}$ & + & \multicolumn{1}{l|}{$x_{u,16}$} &  &  &  &  & $\leq 1$ \\ \cline{17-18}
     &  &  &  &  &  &  &  &  &  &  &  &  &  &  &  &  & \multicolumn{1}{l|}{} & \multicolumn{1}{l|}{$x_{u,16}$} &  &  &  &  & $\leq 1$ \\ \cline{19-19}
    \end{tabular}
}

\vspace{0.5em}
\textit{Range variable illustration (backward chain with $w=8$):}\\[0.5em]
\begin{tabular}{lllllllllllllll}
\cline{1-11}
\multicolumn{1}{|l}{$x_{u,1}$} & + & $x_{u,2}$ & + & $x_{u,3}$ & + & $\cdots$ & + & $x_{u,7}$ & + & \multicolumn{1}{l|}{$x_{u,8}$} &  &  &  & $\Leftrightarrow R_{u,1,8}$ \\ \cline{1-2}
 & \multicolumn{1}{l|}{} & $x_{u,2}$ & + & $x_{u,3}$ & + & $\cdots$ & + & $x_{u,7}$ & + & \multicolumn{1}{l|}{$x_{u,8}$} &  &  &  & $\Leftrightarrow R_{u,2,8}$ \\ \cline{3-4}
 &  &  & \multicolumn{1}{l|}{} & $x_{u,3}$ & + & $\cdots$ & + & $x_{u,7}$ & + & \multicolumn{1}{l|}{$x_{u,8}$} &  &  &  & $\Leftrightarrow R_{u,3,8}$ \\ \cline{5-6}
 &  &  &  &  & \multicolumn{1}{l|}{$\ddots$} &  &  & $\vdots$ &  & \multicolumn{1}{l|}{$\vdots$} &  &  &  & $\vdots$ \\ \cline{7-8}
 &  &  &  &  &  &  & \multicolumn{1}{l|}{} & $x_{u,7}$ & + & \multicolumn{1}{l|}{$x_{u,8}$} &  &  &  & $\Leftrightarrow R_{u,7,8}$ \\ \cline{9-10}
 &  &  &  &  &  &  &  &  & \multicolumn{1}{l|}{} & \multicolumn{1}{l|}{$x_{u,8}$} &  &  &  & $\Leftrightarrow R_{u,8,8}$ \\ \cline{11-11}
\end{tabular}

\vspace{1em}
\textit{Range variable illustration (forward chain for last block):}\\[0.5em]
\begin{tabular}{llllllllll}
\cline{1-1}
\multicolumn{1}{|l|}{$x_{u,8(m-1)+1}$} &  &  &  &  &  &  &  &  & $\Leftrightarrow R_{u,8(m-1)+1,8(m-1)+1}$ \\ \cline{2-3}
\multicolumn{1}{|l}{$x_{u,8(m-1)+1}$} & + & \multicolumn{1}{l|}{$x_{u,8(m-1)+2}$} &  &  &  &  &  &  & $\Leftrightarrow R_{u,8(m-1)+1,8(m-1)+2}$ \\ \cline{4-5}
\multicolumn{1}{|l}{$x_{u,8(m-1)+1}$} & + & $x_{u,8(m-1)+2}$ & + & \multicolumn{1}{l|}{$\ddots$} &  &  &  &  & $\vdots$ \\ \cline{6-7}
\multicolumn{1}{|l}{$x_{u,8(m-1)+1}$} & + & $x_{u,8(m-1)+2}$ & + & $\cdots$ & + & \multicolumn{1}{l|}{$x_{u,k}$} &  &  & $\Leftrightarrow R_{u,8(m-1)+1,k}$ \\
\cline{1-7}
\end{tabular}
\caption{Staircase block structure for a vertex $u$ with block width $w=8$. The backward chain (top-left) builds ranges starting from $x_{u,1}$ and extending toward the block boundary. The bidirectional chains (middle) support both directions for inter-block ranges. The forward chain (bottom-right) builds ranges ending at the block boundary.}
\label{fig:block-structure}
\end{figure*}

\subsubsection{Distance Constraints}

The distance constraint $|c(u) - c(v)| \geq d_{u,v}$ is equivalent to preventing any pair of colors $(i, j)$ such that $|i-j| < d_{u,v}$. This can be formulated as a cardinality constraint:
\begin{equation}\label{eq:block-dist}
\sum_{i=c}^{c+d_{u,v}-1} x_{u,i} + \sum_{i=c}^{c+d_{u,v}-1} x_{v,i} \leq 1
\end{equation}
for all valid offsets $c \in [1, k - d_{u,v} + 1]$. The term $\sum_{i=c}^{c+d_{u,v}-1} x_{u,i}$ represents whether vertex $u$ is assigned a color in the conflicting window $[c, c+d_{u,v}-1]$.

\subsubsection{Expressing Range Sums Using Block Variables}

The key insight of the block encoding is that the range sum $\sum_{i=c}^{c+d_{u,v}-1} x_{u,i}$ can be expressed using the auxiliary $R$ variables, avoiding the need to enumerate all individual assignment variables. Let $B$ denote the block containing position $c$, with start and end positions $s(B)$ and $e(B)$ respectively.

\textit{Case 1: Range aligns with block boundaries.}
If $c = s(B)$ and $c + d_{u,v} - 1 = e(B)$, then:
\begin{equation}
\sum_{i=c}^{c+d_{u,v}-1} x_{u,i} = R_{u, s(B), e(B)}
\end{equation}

\textit{Case 2: Range spans two adjacent blocks.}
If the range $[c, c+d_{u,v}-1]$ crosses a block boundary into an adjacent block $B'$, then:
\begin{equation}
\sum_{i=c}^{c+d_{u,v}-1} x_{u,i} = R_{u, c, e(B)} + R_{u, s(B'), c+d_{u,v}-1}
\end{equation}

\textit{Case 3: Range is strictly contained within a block.}
This is the critical case that motivates the subtraction operation. When the range $[c, c+d_{u,v}-1]$ lies strictly within a block (i.e., $s(B) < c$ and $e(B) > c+d_{u,v}-1$), a subtraction is required:
\begin{itemize}
    \item For a backward-chained block (first block):
    \begin{equation}
    \sum_{i=c}^{c+d_{u,v}-1} x_{u,i} = R_{u, c, e(B)} - R_{u, c+d_{u,v}, e(B)}
    \end{equation}
    \item For a forward-chained block (last block):
    \begin{equation}
    \sum_{i=c}^{c+d_{u,v}-1} x_{u,i} = R_{u, s(B), c+d_{u,v}-1} - R_{u, s(B), c-1}
    \end{equation}
    \item For bidirectional blocks (middle blocks), either direction can be used.
\end{itemize}

\textit{Motivation for subtraction operations.}
The subtraction formulation is essential because it allows using a uniform block width (e.g., $w = \max_{\{u,v\} \in E} d_{u,v}$) for all vertices, regardless of the specific edge weights incident to each vertex. Without subtraction, one would need to construct blocks with different widths for different edge weights, significantly complicating the encoding structure and increasing the number of auxiliary variables. With subtraction, a single staircase block structure suffices to handle all edge weights up to the block width.

\subsubsection{Encoding Subtraction Operations}

We present two approaches for encoding the subtraction $A - B$ that appears in the range sum expressions.

\textit{1. Direct encoding without auxiliary variables (X).}
The X encoding substitutes the subtraction directly into the pairwise constraint without introducing auxiliary variables. For the constraint $(A - B) + C \leq 1$ (where $A, B, C$ are $R$ variables), the logical interpretation is: ``if $A$ is true and $B$ is false, then $C$ must be false.'' This translates to the clause:
\begin{equation}
\neg A \vee B \vee \neg C
\end{equation}

More generally, for the four possible cases in Eq.~\eqref{eq:block-dist}:
\begin{itemize}
    \item If both terms are simple: $R_{u,a,b} + R_{v,c,d} \leq 1$ becomes $\neg R_{u,a,b} \vee \neg R_{v,c,d}$.
    \item If one term involves subtraction: $(R_{u,a,b} - R_{u,a',b'}) + R_{v,c,d} \leq 1$ becomes $\neg R_{u,a,b} \vee R_{u,a',b'} \vee \neg R_{v,c,d}$.
    \item If both terms involve subtraction: $(R_{u,a,b} - R_{u,a',b'}) + (R_{v,c,d} - R_{v,c',d'}) \leq 1$ becomes $\neg R_{u,a,b} \vee R_{u,a',b'} \vee \neg R_{v,c,d} \vee R_{v,c',d'}.$
\end{itemize}

\textit{2. Encoding with auxiliary variables (Xa).}
The Xa encoding introduces an auxiliary variable $S$ to represent the subtraction $S \Leftrightarrow (R_1 \wedge \neg R_2)$, encoded as:
\begin{align}
R_1 \vee \neg R_2 \vee S \\
\neg R_1 \vee R_2 \vee S \\
\neg S \vee R_1 \\
\neg S \vee \neg R_2
\end{align}

Since the same subtraction term (e.g., $R_{u, s, e} - R_{u, s', e'}$) often recurs across different distance constraints, the Xa encoding maintains a map of generated subtraction variables and reuses them when the same term is encountered. This significantly reduces the number of auxiliary variables.

\subsection{Symmetry Breaking}

Graph coloring problems exhibit color symmetry: permuting color labels in a valid solution produces another valid solution~\citep{vangelder2008,mendezdiaz2008}. To break this symmetry, we employ a reflection-based technique that restricts a selected vertex to the lower half of the color range.
The key observation is that for any valid coloring $c$ with span $k$, the reflected coloring $c'(v) = k + 1 - c(v)$ is also valid with the same span. This reflection property arises because the distance constraint $|c(u) - c(v)| \geq d_{u,v}$ is preserved under the transformation:
\begin{equation}
|c'(u) - c'(v)| = |(k + 1 - c(u)) - (k + 1 - c(v))| = |c(u) - c(v)| \geq d_{u,v}
\end{equation}

Based on this observation, we fix the color of the vertex with the highest degree, denoted $h$, to the lower half of the available colors. The rationale for selecting the highest-degree vertex is that it participates in the most constraints, so restricting its color range has the greatest impact on pruning the search space. The validity of this symmetry breaking technique is established in Proposition~\ref{prop:symmetry} (Appendix~\ref{appendix:symmetry}). The specific constraints depend on the encoding type:
\begin{itemize}
    \item \textit{1G:} $y_{h, \lfloor k/2 \rfloor + 1} = \text{false}$ (forces $c(h) \leq \lfloor k/2 \rfloor$).
    \item \textit{1L:} $y_{h, \lfloor k/2 \rfloor} = \text{true}$ (forces $c(h) \leq \lfloor k/2 \rfloor$).
    \item \textit{2G / 2L / Block:} $\neg x_{h, j}$ for all $j > \lfloor k/2 \rfloor$.
\end{itemize}
\subsection{Solving Modes}

\begin{algorithm}[htb]
\caption{Optimal Solving for BCP}
\label{alg:optimal-solving}
\begin{algorithmic}[1]
\Require Graph $G$, upper bound $H$, encoding type, incremental mode flag
\Ensure Optimal span $k^*$
\State $k \gets H$
\State Encode BCP instance for span $k$
\State result $\gets$ SAT\_Solve()
\If{result $=$ UNSAT}
    \State \Return INFEASIBLE
\EndIf
\While{result $=$ SAT and $k > 1$}
    \State $k \gets k - 1$
    \If{incremental mode}
        \State Add assumptions to restrict span to $k$ \Comment{$\neg y_{u,k+1}$ or $y_{u,k}$}
        \State result $\gets$ SAT\_Solve\_With\_Assumptions()
    \Else
        \State Reset solver
        \State Encode BCP instance for span $k$
        \State result $\gets$ SAT\_Solve()
    \EndIf
\EndWhile
\State $k^* \gets k + 1$ \Comment{Last satisfiable span}
\State \Return $k^*$
\end{algorithmic}
\end{algorithm}

The framework supports two solving modes, as illustrated in Algorithm~\ref{alg:optimal-solving}. In the \textit{non-incremental mode}, the solver resets completely for each candidate span value $k$. Starting from the upper bound $H$, the solver encodes and solves the problem for span $k$. If satisfiable, the span is decreased by 1 and the process repeats with a fresh encoding.

In the \textit{incremental mode}, the solver reuses learned clauses across iterations by using SAT assumptions. For the greater-than encodings, the assumption $\neg y_{u,k}$ for all $u \in V$ effectively restricts the span to at most $k-1$ without re-encoding the entire formula. For the less-than encodings, the assumption $y_{u,k-1}$ for all $u \in V$ achieves the same effect. The block encoding does not support incremental mode due to the complex staircase block structure that depends on the span value. The correctness of the proposed encodings, namely the property that the SAT formula is satisfiable if and only if a valid BCP coloring exists, is established formally in Appendix~\ref{appendix:correctness}.


\section{Computational Experiments}\label{sec:experiments}

This section presents the experimental evaluation of the proposed framework. We compare 36 configurations of the proposed method against the state-of-the-art SAT-based approaches POP-S-B and POPH-S-B~\citep{faber2024} on standard benchmark instances. We focus our comparison on these SAT-based methods because \citet{faber2024} demonstrated that they outperform earlier approaches based on integer linear programming (ILP) and constraint programming (CP) from~\citet{dias2021}, establishing them as the current state of the art for exact BCP solving.

\subsection{Benchmark Instances}

We evaluate on two benchmark sets commonly used in the BCP literature. All benchmark instances are available from the Graph Coloring and its Generalizations archive.\footnote{\url{https://mat.gsia.cmu.edu/COLOR03/}}

\textit{GEOM set.} 33 instances from Michael Trick's Graph Coloring benchmarks~\citep{trick2002}, ranging from GEOM20 to GEOM120 with variants (a, b). These instances have $|V| \in [20, 120]$ vertices and represent geometric random graphs with varying densities.

\textit{MS-CAP set.} 18 instances derived from cellular network frequency assignment problems, including the c21 and c55 families from~\citet{dias2021}. These instances are transformed from Bandwidth MultiColoring Problem instances using the vertex-to-clique transformation described in Section~\ref{sec:problem}.

The complete implementation of the proposed framework is publicly available at \url{https://github.com/bcp-ese/bcp-ese}.

\subsection{Experimental Setup}

All experiments were conducted on a computing server equipped with an Intel Core i5-12500H processor (16 threads) and 16 GB RAM, using the CaDiCaL 1.9.5 SAT solver~\citep{cadical} with a time limit of 3600 seconds per instance.

For comparison with the state-of-the-art SAT-based methods POPH-S-B and POP-S-B from~\citet{faber2024}, we reimplemented their encodings in our framework using the same SAT solver (CaDiCaL 1.9.5) to ensure a fair comparison under identical experimental conditions.

\subsection{Overview of Results}

This subsection presents a comprehensive analysis of the experimental results across all 36 configurations and 51 benchmark instances (33 GEOM + 18 MS-CAP). Table~\ref{table:config-overview} summarizes the performance of the top configurations, including comparisons with state-of-the-art SAT-based methods.

\begin{table}[!htb]
\centering
\caption{Performance overview on the combined benchmark.}
\begin{tabularx}{\textwidth}{c c c c c c c c c}
\toprule
\textbf{Rank} & \textbf{Method} & \textbf{Width} & \textbf{Inc.} & \textbf{Sym.} & \textbf{Solved} & \textbf{Time} & \textbf{Vars} & \textbf{Cls} \\
 &  &  &  &  &  & \textbf{(s)} & $(\times 10^3)$ & $(\times 10^4)$ \\
\midrule
1 & Xa & fixed & $x$ & true & \textbf{51} & \textbf{2721.3} & 897.2 & 633.3 \\
2 & 2L & -- & -- & true & \textbf{51} & 2817.1 & 255.4 & 153.9 \\
3 & 1G & -- & -- & true & \textbf{51} & 2834.0 & \textbf{127.7} & \textbf{115.9} \\
4 & POPH-S-B$^\dagger$ & -- & -- & false & \textbf{51} & 2904.6 & 255.4 & 153.9 \\
5 & 2L & -- & -- & false & \textbf{51} & 3041.1 & 255.4 & 153.9 \\
6 & 2G & -- & -- & true & \textbf{51} & 3082.5 & 255.4 & 153.9 \\
7 & 1L & -- & -- & false & \textbf{51} & 3177.0 & \textbf{127.7} & \textbf{115.9} \\
8 & 2G & -- & -- & false & \textbf{51} & 3194.3 & 255.4 & 153.9 \\
9 & X & vary & $x$ & true & \textbf{51} & 3202.1 & 470.4 & 461.4 \\
10 & Xa & fixed & -- & false & \textbf{51} & 3205.5 & 623.5 & 426.4 \\
11 & POP-S-B$^\dagger$ & -- & -- & false & \textbf{51} & 3307.5 & \textbf{127.7} & \textbf{115.9} \\
\bottomrule
\end{tabularx}
\begin{flushleft}
\footnotesize{Inc.: incremental mode; Sym.: symmetry breaking; Vars: total number of variables; Cls: total number of clauses.\\
$^\dagger$ State-of-the-art SAT-based methods from~\citet{faber2024}.}
\end{flushleft}
\label{table:config-overview}
\end{table}

The results in Table~\ref{table:config-overview} suggest several findings regarding the proposed framework. First, the block encoding Xa with incremental solving and symmetry breaking achieves the best overall performance among the tested configurations, reducing total solving time by 6.3\% compared to POPH-S-B (2721.3s vs.\ 2904.6s) and by 17.7\% compared to POP-S-B (2721.3s vs.\ 3307.5s). Second, the top three configurations outperform both state-of-the-art methods on this benchmark set, indicating consistent improvements across different encoding approaches.

The relationship between encoding size and solving efficiency reveals a notable pattern. Although 1G and POP-S-B generate the fewest variables (127.7K) and clauses (115.9K), their solving times differ by 16.7\% (2834.0s vs.\ 3307.5s). This gap suggests that encoding compactness alone does not determine performance, a phenomenon well-documented in SAT encoding research~\citep{vangelder2008,ansotegui2004}. The semantic structure of clauses and their interaction with solver propagation mechanisms appear to play an equally important role.

Conversely, the Xa encoding generates seven times more variables and 5.5 times more clauses than the most compact encodings, yet achieves the fastest solving time in our experiments. This observation is consistent with the principle that auxiliary variables can strengthen unit propagation during conflict analysis~\citep{sinz2005}; the additional variables may create ``shortcuts'' that help the solver derive implications more efficiently.

The role of symmetry breaking appears to be an important factor. Comparing 2L with and without symmetry breaking (ranks 2 vs.\ 5), symmetry breaking provides a 7.4\% speedup. However, the effectiveness varies by encoding, consistent with findings that symmetry breaking interacts with encoding structure in complex ways~\citep{hebrard2020}.

Notably, the order-based encodings (2L, 1G) achieve solving times within 4\% of the best configuration despite their simpler formulation, offering practitioners a favorable trade-off between implementation complexity and computational efficiency~\citep{faber2024}.

\subsection{Cumulative Success Analysis}

\begin{figure}[!htb]
\centering
\begin{tikzpicture}
\begin{axis}[
    width=0.95\textwidth,
    height=6.5cm,
    xlabel={Time limit (seconds)},
    ylabel={Number of instances solved},
    xmin=0, xmax=2700,
    ymin=47.5, ymax=51.5,
    xtick={0, 500, 1000, 1500, 2000, 2500},
    ytick={48, 49, 50, 51},
    legend pos=south east,
    legend style={font=\footnotesize},
    grid=major,
    grid style={dashed, gray!30},
]
\addplot[thick, solid, mark=*, mark size=2pt, blue] coordinates {
    (500, 49) (1000, 50) (1500, 51) (2000, 51) (2500, 51)
};
\addplot[thick, dashdotted, mark=square*, mark size=2pt, cyan] coordinates {
    (500, 49) (1000, 49) (1500, 51) (2000, 51) (2500, 51)
};
\addplot[thick, dotted, mark=diamond*, mark size=2pt, blue!60] coordinates {
    (500, 48) (1000, 50) (1500, 51) (2000, 51) (2500, 51)
};
\addplot[thick, dashed, mark=triangle*, mark size=2pt, red] coordinates {
    (500, 50) (1000, 50) (1500, 50) (2000, 50) (2500, 51)
};
\legend{Xa$^*$, X$^*$, Xa, Order-based / SOTA}
\end{axis}
\end{tikzpicture}
\caption{Cumulative success curves showing instances solved as a function of time limit. Xa$^*$ = Xa (fixed, $x$, Sym.); X$^*$ = X (vary, $x$, Sym.); Xa = Xa (fixed, no Inc., no Sym.).}
\label{fig:cumulative-success}
\end{figure}
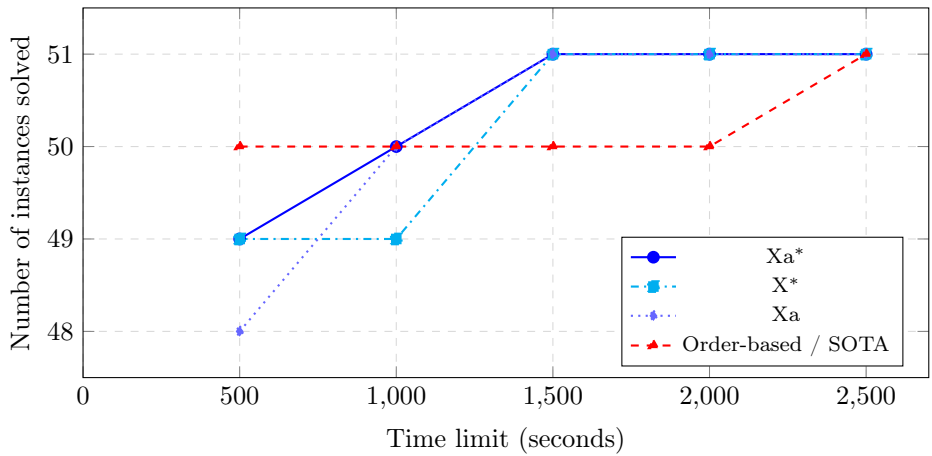

\begin{table}[!htb]
\centering
\caption{Number of instances solved at each time threshold (seconds).}
\label{table:cumulative-success}
\begin{tabular}{c c c c c c c c c}
\hline
 & & & & \multicolumn{5}{c}{\textbf{Timeout threshold (seconds)}} \\
\cline{5-9}
\textbf{Method} & \textbf{Width} & \textbf{Inc.} & \textbf{Sym.} & \textbf{500} & \textbf{1000} & \textbf{1500} & \textbf{2000} & \textbf{2500--3600} \\
\hline
Xa & fixed & $x$ & true  & 49 & 50 & \textbf{51} & 51 & 51 \\
X  & vary  & $x$ & true  & 49 & 49 & \textbf{51} & 51 & 51 \\
Xa & fixed & --  & false & 48 & 50 & \textbf{51} & 51 & 51 \\
\hline
1G & -- & -- & true  & 50 & 50 & 50 & 50 & 51 \\
1L & -- & -- & false & 50 & 50 & 50 & 50 & 51 \\
2G & -- & -- & false & 50 & 50 & 50 & 50 & 51 \\
2G & -- & -- & true  & 50 & 50 & 50 & 50 & 51 \\
2L & -- & -- & false & 50 & 50 & 50 & 50 & 51 \\
2L & -- & -- & true  & 50 & 50 & 50 & 50 & 51 \\
POPH-S-B$^\dagger$ & -- & -- & false & 50 & 50 & 50 & 50 & 51 \\
POP-S-B$^\dagger$  & -- & -- & false & 50 & 50 & 50 & 50 & 51 \\
\hline
\end{tabular}
\begin{flushleft}
\footnotesize{$^\dagger$ State-of-the-art SAT-based methods from~\citet{faber2024}.}
\end{flushleft}
\end{table}

Figure~\ref{fig:cumulative-success} and Table~\ref{table:cumulative-success} present the cumulative success analysis, showing the number of instances solved as a function of the time limit. A clear dichotomy emerges between block encodings and order-based encodings in their solving trajectories.

The block encodings (Xa and X) complete all 51 instances by 1500 seconds, while all order-based encodings (1G, 1L, 2G, 2L) and the state-of-the-art methods (POPH-S-B, POP-S-B) require 2500 seconds. This difference of 1000 seconds is attributable to a single challenging instance (GEOM120b). This result suggests that block encodings may provide substantially better performance on the hardest instances.

Among the block encodings, Xa with incremental solving and symmetry breaking (Xa, fixed, $x$, Sym.) exhibits the most balanced behavior: it solves 49 instances by 500 seconds and reaches 51 by 1500 seconds. The base Xa configuration without incremental solving or symmetry breaking (Xa, fixed) initially lags with only 48 instances at 500 seconds but catches up by 1000 seconds and completes all instances by 1500 seconds. The X encoding with vary-width strategy shows slower progress at 1000 seconds (49 vs.\ 50 for Xa configurations) but still achieves full completion by 1500 seconds.

In contrast, all order-based encodings and state-of-the-art methods exhibit identical cumulative behavior: they solve 50 instances quickly (by 500 seconds) but remain stuck at 50 until 2500 seconds. This uniform behavior suggests that these encodings may share similar search space characteristics that make GEOM120b particularly challenging. The block encodings' ability to solve this instance 1000 seconds faster suggests that their constraint structure may provide more effective pruning for instances with high edge density and large span requirements.

\subsection{Performance on Hard Instances}

This subsection analyzes the performance on the ten most computationally demanding instances. Figure~\ref{fig:hard-detailed} presents the solving time for each configuration on each of the ten hard instances, while Figure~\ref{fig:hard-summary} summarizes GEOM120b performance and cumulative solving time.

\begin{figure}[!htb]
\centering
\begin{tikzpicture}
\begin{axis}[
    width=0.98\textwidth,
    height=8cm,
    xlabel={Instance},
    ylabel={Time (seconds)},
    ymin=0, ymax=2500,
    symbolic x coords={GEOM80a, GEOM80b, GEOM90a, GEOM90b, GEOM100a, GEOM100b, GEOM110a, GEOM110b, GEOM120a, GEOM120b},
    xtick=data,
    x tick label style={rotate=45, anchor=east, font=\footnotesize},
    legend pos=north west,
    legend style={font=\scriptsize, cells={anchor=west}},
    legend columns=2,
    grid=major,
    grid style={dashed, gray!30},
    ytick={0, 500, 1000, 1500, 2000, 2500},
]
\addplot[thick, solid, mark=*, mark size=2pt, blue] coordinates {
    (GEOM80a, 5.99) (GEOM80b, 6.05) (GEOM90a, 9.02) (GEOM90b, 56.45)
    (GEOM100a, 94.09) (GEOM100b, 152.63) (GEOM110a, 233.17) (GEOM110b, 714.07)
    (GEOM120a, 425.40) (GEOM120b, 1011.60)
};
\addplot[thick, dotted, mark=triangle*, mark size=2pt, green!60!black] coordinates {
    (GEOM80a, 3.81) (GEOM80b, 6.79) (GEOM90a, 7.99) (GEOM90b, 25.08)
    (GEOM100a, 252.86) (GEOM100b, 96.71) (GEOM110a, 99.08) (GEOM110b, 183.25)
    (GEOM120a, 111.07) (GEOM120b, 2015.05)
};
\addplot[thick, dashdotted, mark=diamond*, mark size=2pt, orange] coordinates {
    (GEOM80a, 2.77) (GEOM80b, 5.31) (GEOM90a, 7.27) (GEOM90b, 36.42)
    (GEOM100a, 249.19) (GEOM100b, 63.72) (GEOM110a, 112.03) (GEOM110b, 195.44)
    (GEOM120a, 128.83) (GEOM120b, 2019.91)
};
\addplot[thick, solid, mark=pentagon*, mark size=2pt, red] coordinates {
    (GEOM80a, 5.56) (GEOM80b, 5.69) (GEOM90a, 7.02) (GEOM90b, 26.85)
    (GEOM100a, 225.83) (GEOM100b, 61.44) (GEOM110a, 222.96) (GEOM110b, 230.09)
    (GEOM120a, 103.53) (GEOM120b, 2004.59)
};
\addplot[thick, dashed, mark=star, mark size=2.5pt, purple] coordinates {
    (GEOM80a, 4.58) (GEOM80b, 4.19) (GEOM90a, 5.52) (GEOM90b, 22.62)
    (GEOM100a, 204.91) (GEOM100b, 83.19) (GEOM110a, 295.25) (GEOM110b, 228.58)
    (GEOM120a, 106.48) (GEOM120b, 2342.27)
};
\legend{Xa$^*$, 2L$^*$, 1G$^*$, POPH-S-B, POP-S-B}
\end{axis}
\end{tikzpicture}
\caption{Solving time for each configuration on the ten hardest instances. Xa$^*$ = Xa (fixed, $x$, Sym.); 2L$^*$ = 2L (Sym.); 1G$^*$ = 1G (Sym.).}
\label{fig:hard-detailed}
\end{figure}
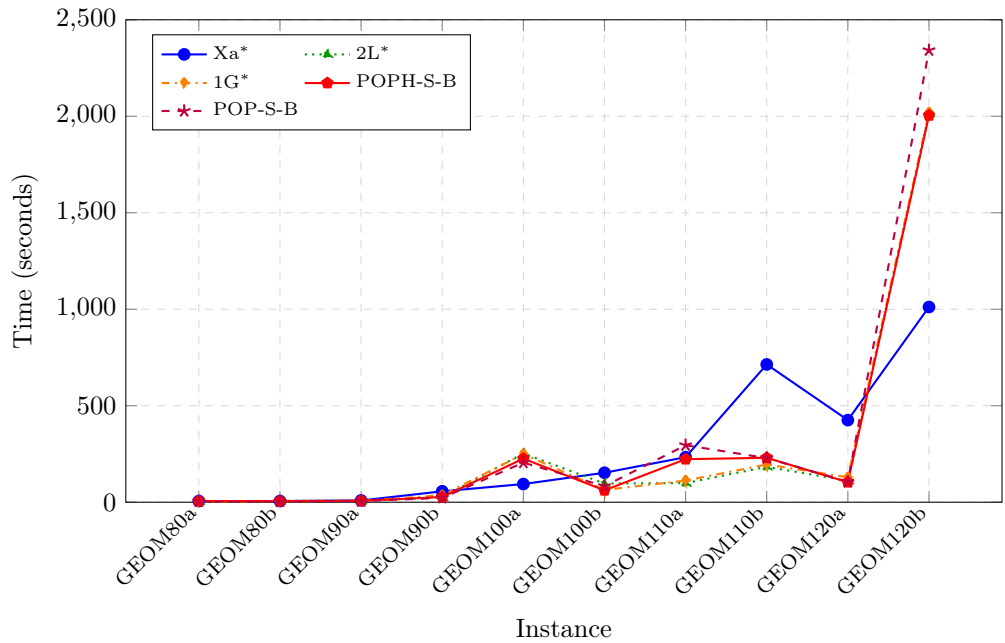

\begin{figure}[!htb]
\centering
\begin{tikzpicture}
\begin{axis}[
    width=0.95\textwidth,
    height=7cm,
    xlabel={Configuration},
    ylabel={Time (seconds)},
    ymin=0, ymax=3600,
    symbolic x coords={Xa-star, 2L-star, 1G-star, POPH-S-B, Xa, X-star, POP-S-B},
    xtick=data,
    xticklabels={Xa$^*$, 2L$^*$, 1G$^*$, POPH-S-B, Xa, X$^*$, POP-S-B},
    x tick label style={font=\footnotesize},
    legend pos=north west,
    legend style={font=\footnotesize},
    grid=major,
    grid style={dashed, gray!30},
    ytick={0, 500, 1000, 1500, 2000, 2500, 3000, 3500},
]
\addplot[thick, solid, mark=*, mark size=3pt, blue] coordinates {
    (Xa-star, 1011.6)
    (2L-star, 2015.05)
    (1G-star, 2019.91)
    (POPH-S-B, 2004.59)
    (Xa, 851.36)
    (X-star, 1363.97)
    (POP-S-B, 2342.27)
};
\addplot[thick, dashed, mark=triangle*, mark size=3pt, red] coordinates {
    (Xa-star, 2708.46)
    (2L-star, 2801.67)
    (1G-star, 2820.90)
    (POPH-S-B, 2893.56)
    (Xa, 3181.60)
    (X-star, 3190.39)
    (POP-S-B, 3297.58)
};
\legend{GEOM120b, Top 10 instances}
\end{axis}
\end{tikzpicture}
\caption{GEOM120b solving time and cumulative time for the ten hardest instances. Xa$^*$ = Xa (fixed, $x$, Sym.); X$^*$ = X (vary, $x$, Sym.); Xa = Xa (fixed, no Inc., no Sym.); 2L$^*$ = 2L (Sym.); 1G$^*$ = 1G (Sym.).}
\label{fig:hard-summary}
\end{figure}
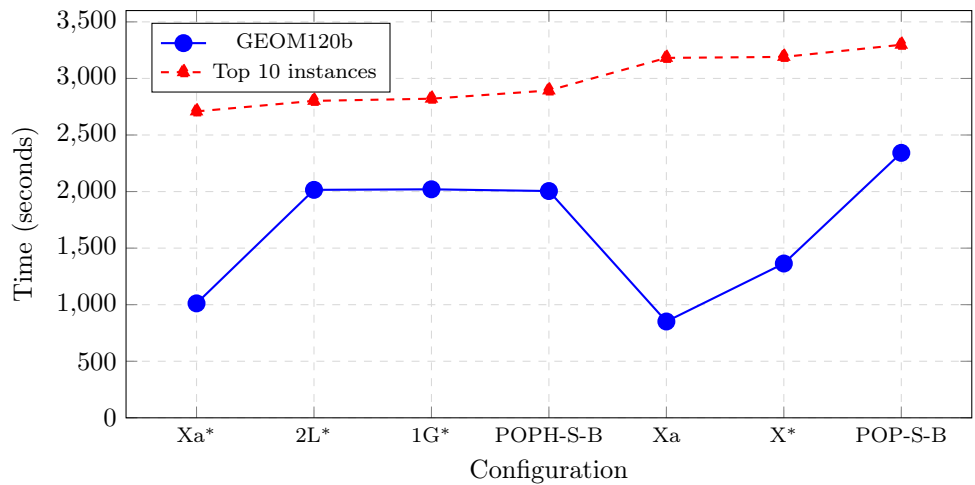

Figure~\ref{fig:hard-detailed} reveals that the block encodings exhibit instance-dependent behavior: while they dominate on GEOM120b and GEOM110b, order-based encodings (1G, 2L with symmetry breaking) achieve faster times on several smaller instances such as GEOM80a, GEOM90b, and GEOM110a. This suggests that the additional overhead of the block encoding's auxiliary variables is amortized only on instances requiring substantial computation time.

The results in Figure~\ref{fig:hard-summary} reveal important performance characteristics on GEOM120b, the most challenging instance. All block encodings (Xa and X variants) solve GEOM120b substantially faster than order-based methods. Notably, the base Xa configuration without incremental solving or symmetry breaking achieves the fastest GEOM120b time (851.4s), followed by Xa with full optimization (1011.6s) and X with vary-width (1364.0s). All three block configurations solve GEOM120b approximately twice as fast as the state-of-the-art POPH-S-B (2004.6s) and POP-S-B (2342.3s).

However, cumulative performance across the ten hardest instances tells a different story. The fully-optimized Xa (fixed, $x$, Sym.) achieves the best cumulative time (2708.5s), outperforming 2L (Sym.) by 3.3\%, 1G (Sym.) by 4.0\%, and POPH-S-B by 6.4\%. The base Xa configuration, despite its superior GEOM120b performance, ranks lower in cumulative time (3181.6s) due to slower performance on other instances. This suggests that incremental solving and symmetry breaking may provide consistent benefits across the full instance set, even when they slightly increase solving time on the single hardest instance.

The order-based encodings with symmetry breaking (2L, 1G) achieve cumulative times competitive with POPH-S-B, offering reductions of 3.2\%--3.8\%. These results suggest that the proposed encoding strategies may provide benefits across challenging instances, with block encodings performing well on the hardest instances and optimized configurations providing favorable overall performance.

\subsection{Feature Effectiveness Analysis}

This subsection quantifies the impact of each configurable feature: incremental solving, symmetry breaking, and block width strategies.

\subsubsection{Impact of Incremental Solving}

Figure~\ref{fig:incremental-analysis} compares solving times across incremental modes for each encoding type. To isolate the effect of incremental solving from the variability introduced by the hardest instance (GEOM120b), this analysis is conducted on the 50 instances that all configurations solve within the 3600-second timeout.

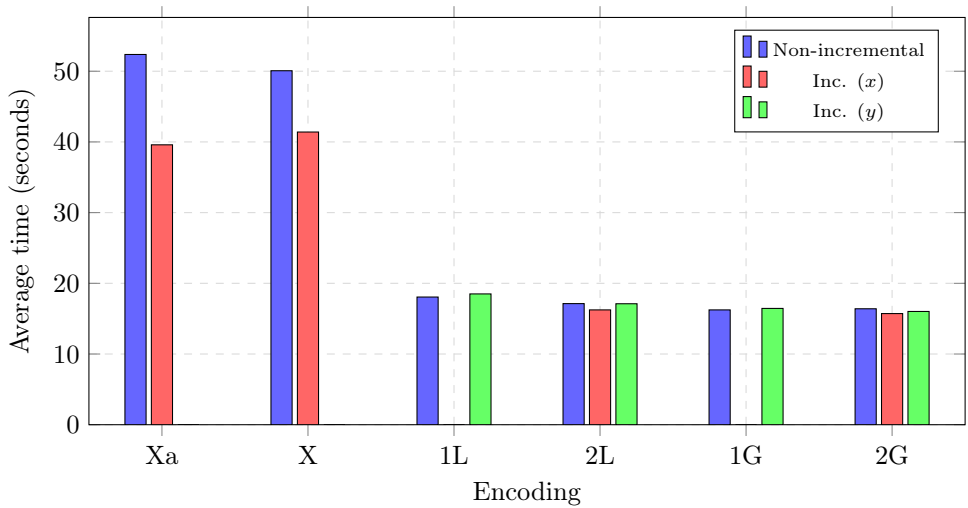
\begin{figure}[!htb]
\centering
\begin{tikzpicture}
\begin{axis}[
    width=0.98\textwidth,
    height=7cm,
    ybar,
    bar width=8pt,
    xlabel={Encoding},
    ylabel={Average time (seconds)},
    ymin=0,
    symbolic x coords={Xa, X, 1L, 2L, 1G, 2G},
    xtick=data,
    legend pos=north east,
    legend style={font=\scriptsize},
    grid=major,
    grid style={dashed, gray!30},
    nodes near coords style={font=\tiny, rotate=90, anchor=west},
]
\addplot[fill=blue!60] coordinates {
    (Xa, 52.37) (X, 50.07) (1L, 18.06) (2L, 17.13) (1G, 16.24) (2G, 16.40)
};
\addplot[fill=red!60] coordinates {
    (Xa, 39.59) (X, 41.40) (1L, 0) (2L, 16.24) (1G, 0) (2G, 15.72)
};
\addplot[fill=green!60] coordinates {
    (Xa, 0) (X, 0) (1L, 18.50) (2L, 17.11) (1G, 16.45) (2G, 16.02)
};
\legend{Non-incremental, Inc. ($x$), Inc. ($y$)}
\end{axis}
\end{tikzpicture}
\caption{Comparison of incremental solving modes across encoding types (50 instances).}
\label{fig:incremental-analysis}
\end{figure}

The effectiveness of incremental solving varies significantly across encoding types, consistent with prior observations that incremental SAT solving benefits depend on how much constraint structure is preserved between iterations~\citep{glorian2019}. For the block encodings Xa and X, incremental mode on $x$ variables provides substantial speedups: Xa (fixed, $x$) achieves a 24.4\% reduction compared to Xa (fixed), while X (vary, $x$) achieves a 17.3\% reduction compared to X (vary). When the span decreases from $k$ to $k-1$, the staircase block structure remains largely unchanged, allowing learned clauses to transfer effectively between iterations.

For the two-variable order encodings 2L and 2G, incremental solving provides moderate improvements. The $x$-variable mode achieves the best performance: 2L ($x$) provides a 5.2\% speedup over 2L, and 2G ($x$) provides a 4.1\% speedup over 2G. This suggests that learned clauses involving assignment variables transfer more effectively than those involving ordering variables.

In contrast, one-variable order encodings 1L and 1G show no benefit from incremental solving. When the span decreases, all ordering constraints are redefined relative to the new upper bound, invalidating most previously learned clauses. This is a known limitation of incremental approaches when problem structure changes significantly~\citep{glorian2019}.

\subsubsection{Impact of Symmetry Breaking}

Figure~\ref{fig:symmetry-analysis} compares solving times with and without symmetry breaking for each encoding type, including the state-of-the-art methods POPH-S-B and POP-S-B. This evaluation is also conducted on the 50 instances solved by all configurations within the timeout.

\begin{figure}[!htb]
\centering
\begin{tikzpicture}
\begin{axis}[
    width=0.98\textwidth,
    height=7cm,
    ybar,
    bar width=10pt,
    xlabel={Encoding},
    ylabel={Average time (seconds)},
    ymin=0,
    symbolic x coords={Xa, X, POPH-S-B, 1L, POP-S-B, 2L, 1G, 2G},
    xtick=data,
    x tick label style={rotate=30, anchor=east, font=\footnotesize},
    legend pos=north east,
    legend style={font=\scriptsize},
    grid=major,
    grid style={dashed, gray!30},
]
\addplot[fill=blue!60] coordinates {
    (Xa, 45.31) (X, 48.61) (POPH-S-B, 18.00) (1L, 19.57) (POP-S-B, 19.30) (2L, 16.90) (1G, 16.45) (2G, 15.15)
};
\addplot[fill=red!60] coordinates {
    (Xa, 46.66) (X, 42.86) (POPH-S-B, 20.05) (1L, 16.99) (POP-S-B, 14.58) (2L, 16.75) (1G, 16.24) (2G, 16.95)
};
\legend{SB=false, SB=true}
\end{axis}
\end{tikzpicture}
\caption{Comparison of configurations with and without symmetry breaking (50 instances).}
\label{fig:symmetry-analysis}
\end{figure}
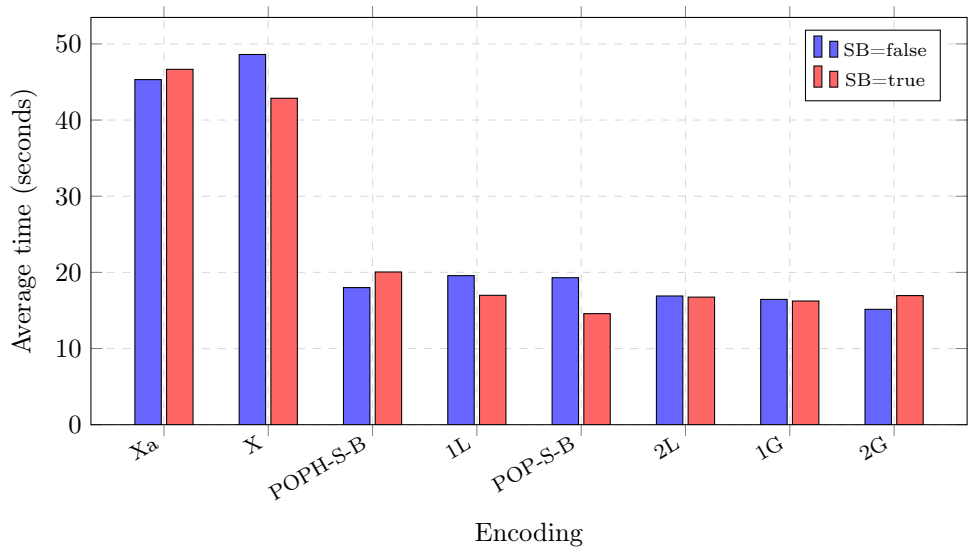

A notable observation is that all three best-performing configurations in Table~\ref{table:config-overview}---Xa (fixed, $x$, Sym.), 2L (Sym.), and 1G (Sym.)---have symmetry breaking enabled. This suggests that symmetry breaking may contribute positively to overall performance when combined with appropriate encoding choices.

However, the effectiveness of symmetry breaking exhibits strong encoding dependence, a phenomenon observed in SAT-based graph coloring~\citep{hebrard2020,mendezdiaz2008}. A notable benefit is observed for POP-S-B (Sym.), where symmetry breaking provides a 24.5\% speedup compared to POP-S-B. The configuration 1L (Sym.) shows a 13.2\% speedup over 1L, and X (vary, Sym.) achieves an 11.8\% improvement over X (vary).

In contrast, symmetry breaking degrades performance for other encodings. The configuration 2G (Sym.) exhibits an 11.9\% slowdown compared to 2G, and POPH-S-B (Sym.) experiences an 11.4\% slowdown compared to POPH-S-B. This result is consistent with findings by \citet{hebrard2020} that symmetry breaking constraints can interact unpredictably with CDCL solver heuristics, sometimes causing increased backtracking.

The divergent behavior between structurally similar encodings (POP-S-B vs.\ POPH-S-B, and Xa vs.\ X) suggests that symmetry breaking should be applied selectively based on the encoding type~\citep{mendezdiaz2008}. Nevertheless, when combined with well-suited encodings such as Xa, 2L, and 1G, symmetry breaking appears to be an important component of the best-performing configurations.

\subsubsection{Impact of Block Width Strategies}

For the block encodings (X and Xa), we analyze the impact of block width strategies. Figure~\ref{fig:width-analysis} compares configurations across fixed and varying width strategies. To isolate the effect of block width from the variability introduced by the hardest instance (GEOM120b), this analysis is conducted on the 50 instances that all configurations solve within the 3600-second timeout.

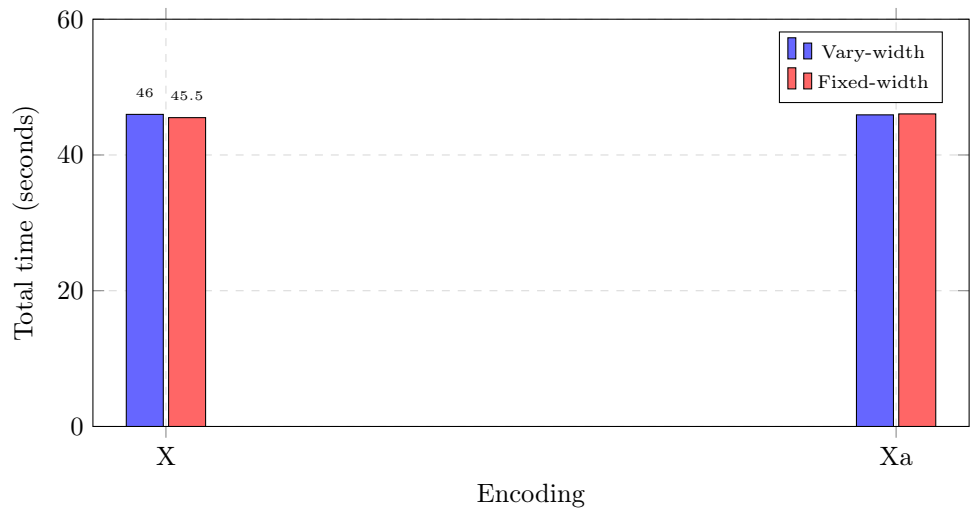
\begin{figure}[!htb]
\centering
\begin{tikzpicture}
\begin{axis}[
    width=0.98\textwidth,
    height=7cm,
    ybar,
    bar width=14pt,
    xlabel={Encoding},
    ylabel={Total time (seconds)},
    ymin=0,
    ymax=60,
    symbolic x coords={X, Xa},
    xtick=data,
    legend pos=north east,
    legend style={font=\scriptsize},
    grid=major,
    grid style={dashed, gray!30},
    nodes near coords,
    nodes near coords style={font=\tiny, /pgf/number format/fixed, /pgf/number format/precision=1},
    every node near coord/.append style={rotate=0, anchor=south, yshift=3pt},
]
\addplot[fill=blue!60] coordinates {
    (X, 45.98) (Xa, 45.91)
};
\addplot[fill=red!60] coordinates {
    (X, 45.50) (Xa, 46.05)
};
\legend{Vary-width, Fixed-width}
\end{axis}
\end{tikzpicture}
\caption{Comparison of block width strategies for block encodings (50 instances).}
\label{fig:width-analysis}
\end{figure}

The results reveal that block width strategies have minimal impact on solving performance across the 50 standard instances. For the Xa encoding, Xa (vary) achieves a marginal 0.3\% speedup (from 46.05s to 45.91s) compared to Xa (fixed). For the X encoding, X (fixed) achieves a slight 1.0\% speedup (from 45.98s to 45.50s) compared to X (vary).

These small differences suggest that block width selection is a secondary factor compared to other encoding features such as incremental solving and symmetry breaking. The similar performance across width strategies indicates that the SAT solver's clause learning mechanisms can adapt effectively to both fixed and varying block structures on instances of moderate difficulty.

The overall best configuration within this analysis is X (fixed) at 45.50s, followed closely by Xa (vary) at 45.91s, X (vary) at 45.98s, and Xa (fixed) at 46.05s. However, given the small performance differences (less than 1.2\% between best and worst), practitioners can choose either width strategy based on implementation convenience without significant performance penalty on typical instances.

\subsection{Discussion}

The experimental results present several observations about the design of SAT encodings for the BCP.

\textit{Interaction effects between encoding features.} Encoding features exhibit significant interaction effects rather than operating independently. Incremental solving benefits block encodings substantially but provides marginal or negative effects for one-variable encodings. This is consistent with the findings of \citet{glorian2019}, who observed that incremental SAT solving effectiveness depends heavily on how much constraint structure is preserved between iterations. Similarly, symmetry breaking shows strong encoding dependence, benefiting some encodings while degrading others. This phenomenon appears consistent with observations by \citet{hebrard2020} that symmetry breaking constraints can interact unpredictably with solver heuristics.

\textit{Encoding size versus propagation strength.} The Xa encoding generates significantly more variables and clauses than order-based encodings yet achieves the fastest solving time in our experiments. This observation is consistent with the well-established principle in SAT encoding design that auxiliary variables can strengthen unit propagation~\citep{sinz2005,ansotegui2004}. As noted by \citet{vangelder2008}, compact encodings do not necessarily lead to faster solving; the semantic structure of constraints and their interaction with CDCL propagation mechanisms may outweigh encoding size effects.

\textit{Instance-dependent performance.} Block encodings perform well on hard instances but show higher overhead on easier ones, where order-based encodings offer competitive performance with simpler implementation. This trade-off between encoding complexity and instance difficulty has been observed in related graph coloring work~\citep{hebrard2019,heule2022}, suggesting that adaptive encoding selection based on instance characteristics may be a worthwhile direction for future research.


\section{Conclusions}\label{sec:conclusions}

This paper presented a systematic design study of SAT encodings for the Bandwidth Coloring Problem. The proposed framework organizes encoding methods into three categories (one-variable, two-variable, and block encodings) and investigates the impact of configurable features including incremental solving, symmetry breaking, and block width strategies.

The experimental evaluation on 51 benchmark instances indicates that the proposed framework achieves competitive and, in several cases, state-of-the-art performance. The best-performing configuration in our experiments, a block encoding with incremental solving and symmetry breaking, outperforms previously published SAT-based methods by a notable margin in total solving time. On GEOM120b, the most challenging instance in our benchmark set, block encodings show speedup factors of approximately two compared to the state-of-the-art methods. Notably, several order-based encodings also outperform the state-of-the-art while maintaining simpler implementations.

An important observation from the experiments is that encoding features exhibit significant interaction effects rather than operating independently. Incremental solving benefits block encodings substantially but provides marginal or negative effects for one-variable encodings. Symmetry breaking effectiveness varies considerably across encoding types, improving some encodings while degrading others. In contrast, block width strategies show minimal impact on performance. These findings highlight the value of systematic configuration exploration and suggest that practitioners should carefully evaluate feature combinations rather than assuming universal benefit from individual optimizations.

Several directions for future research emerge from this work. First, extending the proposed framework to the Bandwidth MultiColoring Problem would allow direct handling of instances where vertices require multiple colors. Second, developing adaptive configuration selection mechanisms based on instance characteristics could help identify suitable encodings automatically. Third, exploring parallel SAT solving techniques may enable tackling larger instances. Finally, investigating hybrid approaches that combine SAT encodings with constraint propagation techniques from constraint programming solvers could yield further improvements.

\bigskip
\noindent
{\small{\bf DATA AVAILABILITY.}
The source code and experimental data are available at \url{https://github.com/bcp-ese/bcp-ese}.
}

\bigskip
\noindent
{\small{\bf FUNDING.}
The authors received no financial support for the research and the publication of this article.
}

\bigskip
\noindent
{\small{\bf AUTHORS' CONTRIBUTIONS.}
\textbf{Duc Trung Kim Nguyen}: Conceptualization, Methodology, Software, Validation, Investigation, Data Curation.
\textbf{Tuyen Van Kieu}: Validation, Writing -- Original Draft, Visualization, Writing -- Review \& Editing, Supervision.
\textbf{Khanh Van To}: Conceptualization, Methodology, Writing -- Review \& Editing, Supervision, Project Administration.
All authors have read and approved the final manuscript.
}

\bigskip
\noindent
{\small{\bf CONFLICTS OF INTEREST.}
The authors declare that they have no known competing financial interests or personal relationships that could have appeared to influence the work reported in this paper.
}

\appendix

\section{Proof of Encoding Correctness}\label{appendix:correctness}

This appendix establishes the correctness of the proposed order-based encodings.

\begin{theorem}[Correctness of Order-Based Encodings]\label{thm:correctness}
For any graph $G = (V, E)$ with edge weights $d: E \rightarrow \mathbb{N}$ and span bound $k$, the SAT formula $\Phi_k$ produced by the 1G, 1L, 2G, or 2L encoding is satisfiable if and only if there exists a valid BCP coloring with span at most $k$.
\end{theorem}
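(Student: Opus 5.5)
The proof runs in two directions, soundness and completeness, done in detail for 1G; the other three encodings then follow by substitution. Throughout, the boundary conventions stated with the encodings are made explicit. For 1G we set $y_{v,j}=\text{true}$ for $j\le 1$ and $y_{v,j}=\text{false}$ for $j>k$. For 1L we set $y_{v,j}=\text{false}$ for $j\le 0$ and $y_{v,j}=\text{true}$ for $j\ge k$. Every clause mentioning an out-of-range index is read under these conventions. The clauses are also taken for every ordered pair $(u,v)$ of an edge, so each edge contributes its distance clauses in both directions.

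\emph{Completeness (coloring $\Rightarrow$ model).} Let $c$ be a valid coloring with $\max_v c(v)\le k$. Define the assignment by the intended semantics: $y_{u,j}:=[c(u)\ge j]$ for 1G/2G, $y_{u,j}:=[c(u)\le j]$ for 1L/2L, and $x_{u,j}:=[c(u)=j]$ for 2G/2L. The base and ordering clauses hold because these are monotone threshold indicators and $1\le c(u)\le k$. The channeling clauses hold by definition. For a distance clause at $(u,v,j)$, either its premise ($x_{u,j}$, or $y_{u,j}\wedge\neg y_{u,j\pm1}$) is false, or $c(u)=j$. In the latter case validity gives $c(v)\le j-d_{u,v}$ or $c(v)\ge j+d_{u,v}$, which makes $\neg y_{v,j-d+1}$ or $y_{v,j+d}$ true (resp.\ $y_{v,j-d}$ or $\neg y_{v,j+d-1}$ in the less-than case). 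The boundary conventions are consistent with the semantics, so out-of-range literals cause no trouble.

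\emph{Soundness (model $\Rightarrow$ coloring).} Given a model of $\Phi_k$, define $c(u)$ as the transition point: $c(u)=\max\{j: y_{u,j}\}$ for greater-than and $c(u)=\min\{j: y_{u,j}\}$ for less-than. The base clause makes the set nonempty, and the ordering clauses make $y_{u,\cdot}$ monotone, so $y_{u,j}\Leftrightarrow c(u)\ge j$ (resp.\ $\le j$) holds exactly and $c(u)\in[1,k]$. In 2G/2L the channeling clauses then force $x_{u,j}\Leftrightarrow c(u)=j$.

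Now take an edge $\{u,v\}$ and put $j=c(u)$. The premise of the distance clause at $(u,v,j)$ is true. The clause therefore yields $\neg y_{v,j-d+1}\vee y_{v,j+d}$, i.e.\ $c(v)\le j-d$ or $c(v)\ge j+d$, which is $|c(u)-c(v)|\ge d_{u,v}$. For 1G the extra disjunct $y_{u,j+1}$ is false because $c(u)=j$, so the conclusion is the same.

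The only delicate step is this last one, in the cases where $j\pm d$ leaves $[1,k]$. Here I would check case by case that the boundary convention turns the clause into exactly the right restriction. For example, if $j-d+1\le 1$ then $\neg y_{v,j-d+1}$ is false, which correctly expresses that $c(v)\le j-d\le 0$ is impossible. If $j+d>k$, the literal $y_{v,j+d}$ is false, which correctly expresses that $c(v)\ge j+d$ is impossible. The 1L channeling also needs $y_{u,0}=\text{false}$ so that color $1$ is captured.

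These index-boundary checks are the main obstacle; the rest is mechanical. The less-than variants are obtained by the substitution $y^{L}_{u,j}=\neg y^{G}_{u,j+1}$, which maps each 1L/2L clause onto the corresponding 1G/2G clause. I would note this explicitly to avoid repeating the argument.
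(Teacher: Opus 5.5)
Your proposal is correct and takes the same route as the paper: soundness and completeness worked out for 1G by reading the color off the transition point of the monotone $y$-chain, with the other encodings handled by symmetry. Your substitution $y^{L}_{u,j}=\neg y^{G}_{u,j+1}$ makes the paper's ``symmetric arguments'' explicit, and your decoding $c(u)=\max\{j : y_{u,j}\}$ is the correct one, since the paper's $\min\{j : y_{u,j}=\text{false}\}$ is off by one whenever $y_{u,k}$ is false.

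You should, however, say explicitly that you use the 1L distance clause in the form $y_{u,j}\rightarrow(y_{u,j-1}\vee y_{v,j-d_{u,v}}\vee\neg y_{v,j+d_{u,v}-1})$, and base clauses $y_{u,1}$ for 2G and $y_{u,k}$ for 2L. The paper prints $\neg y_{u,j-1}$ in the 1L clause, and then giving every vertex color $k$ is always a model. It also lists no base clause for 2G/2L, and then the all-false assignment is a model. So your substitution step, and the theorem itself, hold only for these corrected encodings.
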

\begin{proof}
We prove the theorem for the 1G encoding; the proofs for 1L, 2G, and 2L follow by symmetric arguments.

(\textit{Soundness}) Let $\sigma$ be a satisfying assignment for $\Phi_k$. Define the coloring $c(u) = \min\{j \in [1,k] : y_{u,j} = \text{false}\}$, with $c(u) = k$ if $y_{u,k} = \text{true}$. 
\begin{itemize}
    \item The base constraint $y_{u,1}$ ensures $c(u) \geq 1$.
    \item The ordering constraints $y_{u,j} \rightarrow y_{u,j-1}$ ensure the set $\{j : y_{u,j} = \text{true}\}$ forms a prefix $[1, c(u)-1]$.
    \item The distance constraints ensure that if $c(u) = j$, then $c(v) \notin [j - d_{u,v} + 1, j + d_{u,v} - 1]$, i.e., $|c(u) - c(v)| \geq d_{u,v}$.
\end{itemize}
Thus, $c$ is a valid BCP coloring with span at most $k$.

(\textit{Completeness}) Let $c: V \rightarrow [1, k]$ be a valid BCP coloring. Define $y_{u,j} = \text{true}$ iff $c(u) \geq j$. All constraints are satisfied by construction: base constraints hold since $c(u) \geq 1$; ordering constraints hold by the prefix structure; distance constraints hold because $|c(u) - c(v)| \geq d_{u,v}$ implies $c(v)$ is outside the forbidden range.
\end{proof}

\section{Proof of Symmetry Breaking Validity}\label{appendix:symmetry}

This appendix establishes the validity of the symmetry breaking technique described in Section~\ref{sec:method}.

\begin{proposition}[Symmetry Breaking Validity]\label{prop:symmetry}
For any optimal BCP solution with span $k^*$, there exists an equivalent solution where the highest-degree vertex $h$ satisfies $c(h) \leq \lfloor k^*/2 \rfloor$.
\end{proposition}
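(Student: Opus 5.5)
The plan is to use the reflection map introduced in the Symmetry Breaking subsection. Fix an optimal coloring $c$ with span $k^*$. Optimality together with $c: V \to \mathbb{N}$ gives $c(v) \in [1, k^*]$ for every $v \in V$.

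Define $c'(v) = k^* + 1 - c(v)$ and check three things in order.
\begin{enumerate}
\item $c'$ maps $V$ into $[1, k^*]$, because $v \mapsto k^*+1-v$ is a bijection of $[1,k^*]$ onto itself.
\item $c'$ satisfies every distance constraint \eqref{eq:bcp-constraint}. This is the identity $|c'(u)-c'(v)| = |c(u)-c(v)|$ already displayed in the paper.
\item $c'$ has span exactly $k^*$. It is at most $k^*$ by the first step, and it cannot be smaller because $k^*$ is optimal.
\end{enumerate}
Hence $c'$ is again an optimal solution, and $c$ and $c'$ are the "equivalent" solutions the statement refers to.

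Next comes the case split on $c(h)$.
\begin{itemize}
\item If $c(h) \le \lfloor k^*/2 \rfloor$, keep $c$.
\item Otherwise $c(h) \ge \lfloor k^*/2 \rfloor + 1$. Then
\[
c'(h) = k^* + 1 - c(h) \le k^* - \lfloor k^*/2 \rfloor = \lceil k^*/2 \rceil .
\]
When $k^*$ is even this equals $\lfloor k^*/2 \rfloor$, and the proposition follows.
\end{itemize}

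The main obstacle is the odd case. If $k^*$ is odd and $c(h) = (k^*+1)/2$, the middle color, then reflection fixes $c(h)$, and the argument only yields $c(h) \le \lceil k^*/2 \rceil$. This is not a gap in the technique but a genuine counterexample. A graph with a single vertex and no edges has $k^* = 1$, yet $\lfloor 1/2 \rfloor = 0$ admits no color at all. Similarly, a path $A$--$h$--$B$ with all weights $1$ has $k^* = 2$ but cannot be forced... (that one is fine, since it is even). The clean counterexample is any instance where every optimal coloring places $h$ at the middle of an odd span. A triangle with weights $1$ is one: $k^*=3$, and some vertex, possibly $h$, must take color $2$. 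Relabeling which vertex gets which color changes the coloring but not $h$'s identity, so ties in degree do not rescue the claim.

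I would therefore first try to salvage the stated bound in the odd middle case. One option is to show that some other optimal coloring, not just $c$ or $c'$, moves $h$ off the middle. The triangle example shows this cannot work in general. My concrete plan is to prove the corrected bound $c(h) \le \lceil k^*/2 \rceil$ by the reflection argument above and to remark that the encodings should use $\lceil k/2 \rceil$. Equivalently, for 1G one should set $y_{h,\lceil k/2\rceil+1}$ false, and analogously for the other encodings. Only with $\lceil\cdot\rceil$ is the added constraint satisfiability-preserving for every $k$, which is the property actually needed for the correctness result in Theorem~\ref{thm:correctness} to survive symmetry breaking.

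Finally, the constraint is added at every tested span $k \ge k^*$, not only at $k^*$. So I would note that the same reflection, with $k$ in place of $k^*$, applies to any coloring with span at most $k$. This shows that each $\Phi_k$ plus the symmetry clause is satisfiable exactly when $\Phi_k$ is, so Algorithm~\ref{alg:optimal-solving} still returns $k^*$.
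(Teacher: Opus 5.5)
Your reflection argument is the same as the paper's, and you are right about the odd case. The paper's ``Bound'' step asserts $k^*+1-\lfloor k^*/2\rfloor \le \lfloor k^*/2\rfloor+1$. That inequality is equivalent to $k^*\le 2\lfloor k^*/2\rfloor$, so it fails for every odd $k^*$. The paper's proof therefore breaks exactly where you say: reflection only gives $c(h)\le\lceil k^*/2\rceil$, and the proposition as stated is false. Your one-vertex edgeless graph, with $k^*=1$, already refutes it. Your span argument is also cleaner than the paper's: at most $k^*$ because reflection maps $[1,k^*]$ onto itself, at least $k^*$ by optimality. The paper instead silently uses $\min_v c(v)=1$. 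Your last paragraph reflects via $c(v)\mapsto k+1-c(v)$ at every tested bound $k$, showing that adding $c(h)\le\lceil k/2\rceil$ to $\Phi_k$ preserves satisfiability. That is the property the solver actually needs, and it is correct. So is your remark that the encodings should use $\lceil k/2\rceil$.

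The one thing to fix is the triangle example. With all weights equal to $1$, every permutation of colors preserves feasibility. So from any optimal coloring of the unit triangle you get another optimal coloring with $c(h)=1\le\lfloor 3/2\rfloor$.

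\begin{itemize}
\item The unit triangle refutes only the narrow reading, where ``equivalent'' means $c$ or its reflection.
\item It does not show that the broader salvage (some other optimal coloring moves $h$ off the middle) fails.
\item Your sentence about relabeling not changing $h$'s identity misses the point: relabeling colors is exactly what rescues $h$ here.
\end{itemize}

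The existential reading is the one that matters for soundness of the symmetry-breaking clause. To refute it non-degenerately, the weights must pin $h$ to the middle. For example, take a triangle on $A,B,h$ with $d_{A,B}=2$ and $d_{A,h}=d_{B,h}=1$. Attach two pendant vertices to $h$ with weight $1$, so $h$ is the unique vertex of maximum degree.

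\begin{itemize}
\item Span $2$ is impossible, since $|c(A)-c(B)|\ge 2$.
\item In any span-$3$ coloring, $\{c(A),c(B)\}=\{1,3\}$, which forces $c(h)=2>\lfloor 3/2\rfloor$.
\end{itemize}

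So with the paper's clause (in 1G, $y_{h,2}$ false) the formula $\Phi_3$ becomes unsatisfiable even though $k^*=3$. Also delete the aborted sentence about the path $A$--$h$--$B$.
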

\begin{proof}
Given any valid coloring $c$ with span $k^*$ where $c(h) > \lfloor k^*/2 \rfloor$, define the reflected coloring $c'(v) = k^* + 1 - c(v)$ for all $v \in V$. This transformation preserves:
\begin{itemize}
    \item \textit{Feasibility}: For any edge $\{u, v\}$, $|c'(u) - c'(v)| = |c(u) - c(v)| \geq d_{u,v}$.
    \item \textit{Span}: $\max_v c'(v) = k^* + 1 - \min_v c(v) = k^* + 1 - 1 = k^*$.
    \item \textit{Bound}: $c'(h) = k^* + 1 - c(h) < k^* + 1 - \lfloor k^*/2 \rfloor \leq \lfloor k^*/2 \rfloor + 1$.
\end{itemize}
Thus, applying the reflection ensures $c'(h) \leq \lfloor k^*/2 \rfloor$ without affecting the span or feasibility.
\end{proof}

\bibliographystyle{apalike} 
\bibliography{referencias.bib} 

\end{document}